\definecolor{color1}{RGB}{128, 0, 0}
\definecolor{color2}{RGB}{0, 0, 128}
\newtheorem{lemma}{Lemma}
\newtheorem{proposition}{Proposition}
\newtheorem{remark}{Remark}
\DeclareMathOperator*{\argmin}{arg\,min}
\begin{document}

\title{Over-the-Air Computation via Reconfigurable Intelligent Surface}

\author{Wenzhi Fang, \textit{Student Member}, \textit{IEEE},  Yuning Jiang, \textit{Member}, \textit{IEEE}, \\ Yuanming Shi, \textit{Senior Member}, \textit{IEEE}, Yong Zhou, \textit{Member}, \textit{IEEE}, \\Wei Chen, \textit{Senior Member},  \textit{IEEE},  and Khaled B. Letaief, \textit{Fellow}, \textit{IEEE}
        
\thanks{This paper was presented in part at the \textit{Proc. of IEEE Globecom}, Waikoloa, HI, Dec. 2019 \cite{jiang2019}. }
\thanks{
W. Fang, Y. Shi, and Y. Zhou are with the School of Information Science and Technology, ShanghaiTech University, Shanghai 201210, China (e-mail: \{fangwzh1, shiym, zhouyong\}@shanghaitech.edu.cn). Y. Jiang is with Automatic Control Laboratory, EPFL, Switzerland (e-mail: yuning.jiang@epfl.ch). W. Chen is with the Department of Electronic Engineering, Tsinghua University, Beijing 100084, China (e-mail: wchen@tsinghua.edu.cn). K. B. Letaief is with the Department of Electronic and Computer Engineering, Hong Kong University of Science and Technology, Hong Kong (e-mail: eekhaled@ust.hk).}% He is also with Peng Cheng Laboratory, Shenzhen, China.} 
}
%\author{Wenzhi~Fang,~\IEEEmembership{Student Member,~IEEE,} 
%       Yuning~Jiang,~\IEEEmembership{Member,~IEEE,}
%       Yuanming~Shi,~\IEEEmembership{Senior Member,~IEEE,}and~Yong~Zhou,~\IEEEmembership{Member,~IEEE,}}% <-this % stops a space
%\thanks{M. Shell was with the Department
%of Electrical and Computer Engineering, Georgia Institute of Technology, Atlanta,
%GA, 30332 USA e-mail: (see http://www.michaelshell.org/contact.html).}% <-this % stops a space
%\thanks{J. Doe and J. Doe are with Anonymous University.}% <-this % stops a space
%\thanks{Manuscript received April 19, 2005; revised August 26, 2015.}}

% The paper headers
%\markboth{Journal of \LaTeX\ Class Files,~Vol.~14, No.~8, August~2015}%
%{Shell \MakeLowercase{\textit{et al.}}: An Efficient Algorithm for Over the Air Computation Aided by the Intelligent Reflecting Surface}

% make the title area
\maketitle

\vspace{-9mm}
\begin{abstract}
Over-the-air computation (AirComp) is a disruptive technique for fast wireless data aggregation in Internet of Things (IoT) networks via exploiting the waveform superposition property of multiple-access channels. 
However, the performance of AirComp is bottlenecked by the worst channel condition among all links between the IoT devices and the access point. 
In this paper, a reconfigurable intelligent surface (RIS) assisted AirComp system is proposed to boost the received signal power and thus mitigate the performance bottleneck by reconfiguring the propagation channels. 
With an objective to minimize the AirComp distortion, we propose a joint design of AirComp transceivers and RIS phase-shifts, which however turns out to be a highly intractable non-convex programming problem. 
To this end, we develop a novel alternating minimization framework in conjunction with the successive convex approximation technique, which is proved to converge monotonically. 
To reduce the computational complexity, we transform the subproblem in each alternation as a smooth convex-concave saddle point problem, which is then tackled by proposing a Mirror-Prox method that only involves a sequence of closed-form updates. 
Simulations show that the computation time of the proposed algorithm can be two orders of magnitude smaller than that of the state-of-the-art algorithms, while achieving a similar distortion performance. 
%By exploiting the benign structure of the resulting subproblems, we further propose a Mirror-Prox method with low computational complexity for optimizing large-scale RIS-AirComp systems. 
%Simulation results demonstrate the algorithmic advantages of the proposed algorithm and performance gains of the presented RIS-AirComp system.
\end{abstract}

\begin{IEEEkeywords}
Over-the-air computation, reconfigurable intelligent surface, successive convex approximation, and Mirror-Prox method.
\end{IEEEkeywords}

\newpage
\section{Introduction}
Driven by the increasing advancement of wireless communication technologies and the decreasing manufacturing costs, Internet of Things (IoT) is expected to support ubiquitous connectivity and automatic transmission for billions of devices equipped with sensing and communication capabilities \cite{IOT}. 
With limited spectrum resources, it is generally challenging to achieve efficient wireless data aggregation over a large volume of IoT devices, which is critical for unleashing the potential of the distributed sensory data. 
The conventional ``transmit-then-compute" approach requires an access point (AP) to successfully receive the data from each IoT device and then compute a specific function (e.g., arithmetic mean) of the received data. 
This is, however, not spectrum-efficient, especially when the number of IoT devices is large. 
Fortunately, over-the-air computation (AirComp), which integrates the communication and computation processes, has the potential to achieve ultra-fast wireless data aggregation in IoT networks. 
This is accomplished by enabling the concurrent data transmissions from all IoT devices over the same radio channel and exploiting the waveform superposition property of multiple-access channels (MACs) at the AP \cite{sum1,sum4}, yielding a revolutionary paradigm of ``compute when communicate". 

The study of AirComp can be traced back to the seminal work \cite{info}, which showed that a fast function computation can be achieved by enabling concurrent analog transmissions. 
There is a growing body of studies concentrated on the transceiver design for AirComp to enable efficient wireless data aggregation \cite{optimal_liu, optimal_huang, chen2018uniform, multi_function, Muti_modal}.
In particular, the authors in \cite{optimal_liu, optimal_huang} proposed optimal transmit power control strategies for AirComp in single-input single-output (SISO) wireless networks with energy-constrained IoT devices. 
The authors in \cite{chen2018uniform} studied AirComp in multiple-input single-output (MISO) wireless networks, where a novel uniform-forcing transmit design was proposed to compensate the non-uniform channel fading among IoT devices. 
By integrating multiple-input multiple-output (MIMO) with AirComp, the authors in \cite{multi_function} and \cite{Muti_modal} investigated the transceiver design for multi-function computation and multi-modal sensing, respectively.
%In particular, the work \cite{Muti_modal} exploited the geometric structure of the receive beamformer on a Grassman manifold.
Meanwhile, the authors in \cite{blind} proposed a blind MIMO AirComp scheme to reduce the signaling overhead for channel state information (CSI) estimation. 
By calibrating the transmission timing of each IoT device, the synchronization issue of AirComp can be addressed. 
In case of non-strict synchronization, the authors in \cite{shao2021federated} proposed an efficient matched filtering and sampling scheme to facilitate misaligned AirComp. 
More recently, the authors in \cite{yangkai, FL_aircomp_Amiri, zhangpeng} exploited the advantages of AirComp to develop a fast model aggregation scheme to accelerate the convergence of federated learning.
According to the aforementioned studies, AirComp requires the magnitudes of the signals to be aligned at the AP; thus, the performance of AirComp is bottlenecked by the worst channel between the IoT devices and the AP.

Reconfigurable intelligent surface (RIS) is an emerging technology, which has recently been proposed to tackle unfavorable channel conditions by reconfiguring the radio propagation environment \cite{liangyichang,zhang,wu2019intelligent, ReconfigurableIRS,xiaojunRIS,fu2020reconfigurable,xia2021reconfigurable,he2020reconfigurable,mmWave,UAV,Energy_harvesting}. 
An RIS is a man-made flat surface composed of many passive reflecting elements, each of which can independently shift the phase of the impinging waves in a controllable way \cite{liangyichang,zhang}, thereby constructing a favorable wireless radio propagation environment. 
The authors in \cite{wu2019intelligent,ReconfigurableIRS,xiaojunRIS} demonstrated that RIS has the ability to significantly enhance the energy efficiency and spectral efficiency of wireless networks. 
Owing to the aforementioned features, RIS has been integrated with various wireless technologies, e.g., non-orthogonal multiple access (NOMA) \cite{fu2020reconfigurable}, massive IoT device connectivity \cite{xia2021reconfigurable}, massive MIMO \cite{he2020reconfigurable}, millimeter-wave communications \cite{mmWave}, unmanned aerial vehicle communications \cite{UAV}, and wireless power transfer \cite{Energy_harvesting}, to further enhance the network performance and promote emerging applications. 

To leverage the advantages of RIS for enhancing the quality of the worst channel between the IoT devices and the AP and in turn mitigating the performance bottleneck of AirComp, the authors in \cite{jiang2019, zhibin} proposed an RIS-assisted AirComp system. 
In these two studies, the alternating semi-definite relaxation (SDR) algorithm and the alternating difference-of-convex (DC) algorithm were proposed to jointly optimize the receive beamforming vector at the AP and the phase-shift matrix at the RIS. 
However, both algorithms suffer from high computational complexity as they need to iteratively solve the semi-definite programming (SDP) problems.
Moreover, the optimization of the phase-shift matrix in \cite{jiang2019, zhibin} involves feasibility detection that cannot be accurately tackled by the SDR and DC algorithms. 
Hence, both algorithms are not guaranteed to achieve monotonic convergence. 
%The proposed alternating DC algorithm in \cite{zhibin} has a convergence guarantee, however, which is not scalable since that need to solve a series of SDP problem in each alternating iteration.
This motivates us to develop a computationally efficient algorithm with convergence guarantee to achieve efficient wireless data aggregation in RIS-assisted AirComp systems.

\subsection{Contributions}
In this paper, we consider an RIS-assisted IoT network, where a multi-antenna AP aggregates the sensory data from multiple IoT devices by using AirComp with the assistance of an RIS. 
We evaluate the performance of AirComp in terms of the computation distortion, which is measured by the mean-squared-error (MSE).
The main objective is to develop a low-complexity algorithm with convergence guarantee to minimize the MSE of RIS-assisted AirComp systems. 
To this end, it is necessary to jointly optimize the transmit scalars at the IoT devices, the receive beamforming vector and the denoising factor at the AP, and the phase-shifts at the RIS. 
The main contributions of this paper are summarized as follows:
\begin{itemize}
\item 
We formulate an MSE minimization problem for RIS-assisted AirComp systems, where the transmit scalars at the IoT devices, the receive beamforming vector and the denoising factor at the AP, and the phase-shift matrix at the RIS are jointly optimized. 
To tackle the scalability and convergence issues of existing studies, we transform the original problem into an equivalent min-max optimization problem. 
It turns out to be a highly intractable non-convex optimization problem due to the non-convex objective function and the coupled optimization variables. 
\item 
We propose an alternating minimization (AlterMin) framework to alternately optimize the receive beamforming vector and the phase-shift vector.
For each optimization problem in the alternating procedure, we iteratively construct a convex surrogate for the non-convex objective function by utilizing the successive convex approximation (SCA) technique.
We prove that the proposed framework is guaranteed to converge, which is a key difference from the existing studies. 
As the subproblem in each SCA iteration is a non-smooth convex problem, the conventional algorithms suffer from high computational complexity. 
\item 
As the objective function involves the pointwise maximum of affine functions, we equivalently transform the resulting non-smooth convex problem into a smooth convex-concave saddle point problem by using the primal-dual transformation.
Subsequently, we adopt the Mirror-Prox method to solve the aforementioned saddle point problem and derive a closed-form expression for each update. 
As a result, the proposed Mirror-Prox based AlterMin SCA algorithm enjoys a very low computational complexity. 

%Therefore, the resulting problem in each SCA iteration can be solved by a Mirror-Prox method with a fast convergence rate and low iteration cost. 
%The proposed Mirror-Prox based AlterMin SCA algorithm thus can be applied for optimizing the large-scale RIS-AirComp system.
\end{itemize}

We conduct extensive simulations to demonstrate the monotonic convergence and the superior performance of the proposed algorithm for RIS-assisted AirComp systems. 
Results will show that the computational time of the proposed algorithm can be two orders of magnitude smaller than that of the alternating SDR and alternating DC algorithms \cite{jiang2019, zhibin}, while achieving a similar data aggregation distortion performance in terms of MSE as these state-of-the-art algorithms. 
Moreover, the performance gain of the proposed algorithm in terms of the computation time increases when the dimensions of the system parameters (e.g., number of IoT devices, AP's antennas, and RIS's reflecting elements) become larger.

\subsection{Organization and Notations}
The reminder of this paper is organized as follows.
Section II describes the system model and the problem formulation. 
Section III presents an AlterMin SCA framework for solving the formulated problem. 
Section IV provides the Mirror-Prox method to solve the subproblems in each SCA iteration. 
The performance of the proposed algorithm is illustrated in Section V. 
Finally, we conclude this paper in Section VI.

\emph{Notations:}
 %The notations are given as follows. 
 Matrices, vectors, and scalars are denoted by boldface upper-case, boldface lower-case, and lower-case letters, respectively.
  $(\bm \cdot)^{\sf{H}}$ and $(\bm \cdot)^{\sf{T}}$ stand for conjugate transpose and transpose of a matrix or a vector, respectively.
 $\|\bm \cdot\|_1$, $\|\bm \cdot\|$, and $\|\bm \cdot\|_{\infty} $ denote the $l_1$, $l_2$, and $l_{\infty}$ norm operators, respectively.
 $\Re[\bm\cdot]$ and $\Im[\bm \cdot]$ represent the real and imaginary parts of a complex matrix, vector, or scalar, respectively.
 %$\text{diag}\{\bm a\}$ represent a diagnal matrix that the diagnal elements is $\bm a$.
 $\mathbb{E}\left[\bm \cdot\right]$ denotes the expectation of a random variable.

%We formulate a computation error of AirComp minimization problem with respect to the denoising factor at the AP, transmit scalars at the IoT sensors, receive beamforming at the AP, and phase-shift vector at the RIS.

%
%AirComp is a promising tech Wireless 
%
%To support these   
%
%
%It is anticipated

%\newpage

\section{System Model and Problem Formulation}

In this section, we first present the system model of an RIS-assisted AirComp system and formulate an AirComp distortion minimization problem that requires the joint optimization of the transmit scalars at the IoT devices, the receive beamforming vector and the denoising factor at the AP, and the phase-shift matrix at the RIS.
Subsequently, we discuss the limitations of the existing methods, which motivate us to reformulate an equivalent min-max optimization problem.

\begin{figure}[t]
\centering{\includegraphics[width=0.5\textwidth]{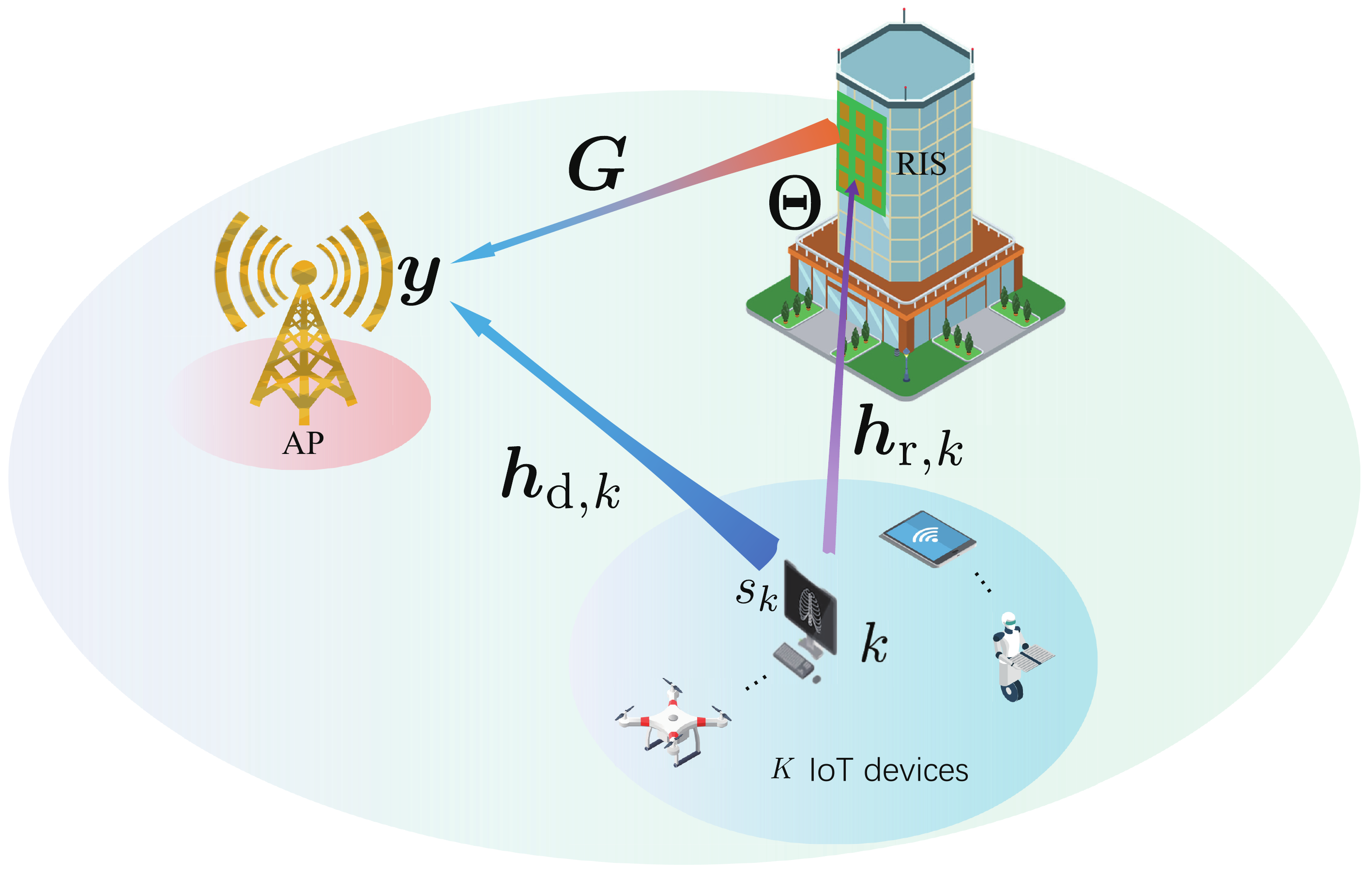}
\caption{Illustration of an RIS-assisted IoT network, where an AP aggregates the data from multiple IoT devices using AirComp. }% with the assistance of an RIS.}
\label{sysmodel}}
\vspace{-8mm}
\end{figure} 

\subsection{System Model}
We consider the uplink transmission of an RIS-assisted single-cell IoT network consisting of $K$ single-antenna IoT devices, an AP with $M$ antennas, and an RIS equipped with $N$ passive reflecting elements, as shown in Fig. \ref{sysmodel}. We denote $\mathcal{K}=\{1,2,\ldots, K\}$ as the index set of IoT  devices.
We consider the scenario that the AP, as a fusion center, is interested in receiving an aggregation (e.g., geometric mean, arithmetic mean) of the sensory data (e.g., temperature, humidity) from all IoT devices, rather than the individual data from each IoT device \cite{multi_function}.
This process is generally referred to as wireless data aggregation in IoT networks.
%For wireless data aggregation, the AP served as a fusion center is interested in receiving the average of the monitored data (e.g., temperature, humidity) at IoT devices \cite{multi_function}.
%For federated learning, the AP served as a parameter server needs to receive the weighted average of the local model parameters from the distributed edge devices rather than the local model parameters of each device to update the global model.
By integrating the computation and communication processes via exploiting the waveform superposition property of MACs, AirComp is adopted in this paper to achieve ultra-fast data aggregation by enabling concurrent transmissions from multiple IoT devices. 
We denote $z_k\in\mathbb{C}$ as the representative information-bearing data at IoT device $k$.
Before transmission, IoT device $k$ normalizes data $z_k$ as information symbol $s_k \in \mathbb{C}$.
Without loss of generality, we assume that $\{s_k\}_{k=1}^{K}$ have zero mean and unit power,  and are independent of each other, i.e., $\mathbb{E}[s_{k}] = 0$, $\mathbb{E}[s_{k} s_{k}^{\sf *}] = 1,$ and $\mathbb{E}[s_{k} s_{i}^{*}] = 0, ~ \forall ~   k \neq i$ \cite{optimal_huang,optimal_liu}.
The target function that the AP aims to recover is the summation of the data from all IoT devices, i.e., 
\begin{align}
        g=\sum_{k\in\mathcal{K}}s_k.
\end{align}
Based on the principle of AirComp, we assume that all IoT devices are synchronized and transmitted concurrently to the AP \cite{multi_function}.
With universal frequency reuse, the signal received at the AP from all IoT devices is given by
\begin{align}
\bm{y}=\sum_{k\in\mathcal{K}}(\bm h_{\mathrm{d},k}+\bm G \bm\Theta \bm  h_{\mathrm{r},k} ){w}_ks_k+\bm{n},
\end{align}
where $w_k\in\mathbb{C}$ denotes the transmit scalar of IoT device $k$, $\bm{\Theta} = \text{diag}\left\{e^{j\theta_1},e^{j\theta_2},\ldots,e^{j\theta_N}\right\}$ denotes the 
 diagonal phase-shift matrix of RIS with $0 \leq \theta_i < 2\pi, ~ \forall ~i$, 
 % ~ \text{and} ~ |\theta_n| = 1,~\forall ~  n=1,\ldots,N$, 
  $ \bm  h_{\mathrm{d},k}\in\mathbb{C}^{M\times1}, \bm G\in\mathbb{C}^{M\times N}$, and $\bm  h_{\mathrm{r},k}\in\mathbb{C}^{N\times 1}$ denote the channel coefficients of the links from device $k$ to the AP, from the RIS to the AP, and from device $k$ to the RIS, respectively,
 and $ \bm n\sim\mathcal{CN}(0, \sigma^2\bm I_M) $ is the additive white Gaussian noise (AWGN) with zero mean and variance $\sigma^2$. %Consider the power constraint at IoT devices, i.e. $|w_k|^2 \leq P,  k\in\mathcal{K}$.
Each device has a maximum transmit power, denoted as $P$. 
Hence, we have $|w_k|^2 \leq P, \forall \, k\in\mathcal{K}$.
As various effective channel estimation methods have been proposed for RIS-assisted wireless networks \cite{shuguang,xiaojun}, 
we assume that perfect CSI is available in our work, as in \cite{wu2019intelligent, ReconfigurableIRS,xiaojunRIS,fu2020reconfigurable,xia2021reconfigurable,he2020reconfigurable,mmWave,UAV,Energy_harvesting,jiang2019,zhibin}.
Besides, we consider block-fading channels, where the channel gain of each link remains invariant within one time slot but varies independently across different time slots. 
The estimated function at the AP is given by \cite{yangkai}
%\begin{align}
%\hat{g}&={\bm{a}^{\sf{H}}\bm{y}}\! \nonumber \\ 
%&={\bm{a}}^{\sf{H}}\sum_{k\in\mathcal{K}}(\bm  h_{\mathrm{d},k} \!+\!\bm G \bm\Theta \bm  h_{\mathrm{r},k} ){w}_ks_k \!+\!\bm{a}^{\sf{H}}\bm{n},
%\end{align} 
%where $\bm{a}\in\mathbb{C}^M$ denotes the receive beamforming vector.
%We adopt MSE to measure the distortion between the estimated function and the desired target one, which quantifies the AirComp performance and can be expressed as \cite{jiang2019}
%\begin{align}
%\label{MSE_ori}
%&{\sf{MSE}}(\hat{g}, g) = \mathbb{E}\left(|\hat{g}-g|^2\right)\nonumber \\
%=& \!\! \sum_{k\in\mathcal{K}}\left|\bm{a}^{\sf{H}}(\bm  h_{\mathrm{d},k}+\bm G \bm\Theta \bm  h_{\mathrm{r},k} ){w}_k -1\right|^2 \!+\!\sigma^2\|\bm{a}\|^2. 
%\end{align}
%The MSE given in \eqref{MSE_ori} depends on the transmit scalar $w_k$, the receive beamforming vector $\bm a$, and the phase-shift matrix $\bm \Theta$.
\begin{align}
\hat{g}={1\over{\sqrt \eta}}{\bm{m}^{\sf{H}}\bm{y}} ={1\over{\sqrt \eta}}{\bm{m}}^{\sf{H}}\sum_{k\in\mathcal{K}}(\bm  h_{\mathrm{d},k} +\bm G \bm\Theta \bm  h_{\mathrm{r},k} ){w}_ks_k +{1\over{\sqrt \eta}}\bm{m}^{\sf{H}}\bm{n},
\end{align} 
where $\bm{m}\in\mathbb{C}^M$ and $\eta$ denote the receive beamforming vector and the denoising factor at the AP, respectively.  

We adopt MSE to measure the distortion between the estimated function (i.e., $\hat{g}$) and the target function (i.e., $g$), which quantifies the performance of AirComp, given by
\begin{align}
\label{MSE_ori}
{\sf{MSE}}(\hat{g}, g) = \mathbb{E}\left(|\hat{g}-g|^2\right)
= \sum_{k\in\mathcal{K}}\left|   \frac{1}{\sqrt{\eta}}{\bm{m}}^{\sf{H}}(\bm  h_{\mathrm{d},k}+\bm G \bm\Theta \bm  h_{\mathrm{r},k} ){w}_k -1\right|^2 \!+\! \frac{\sigma^2\|\bm{m}\|^2}{\eta}. 
\end{align}
%The MSE given in \eqref{MSE_ori} depends on the transmit scalar $w_k$, the receive beamforming vector $\bm m$, the denoising factor $\eta$, and the phase-shift matrix $\bm \Theta$.
Motivated by \cite{chen2018uniform,multi_function, WPT}, we adopt the following zero-forcing design\footnote{It is clear that ${\sf{MSE}}(\hat{g}, g) \geq \frac{\sigma^2\|\bm{m}\|^2}{\eta}$.
Given $\bm m$ and $\bm\Theta$, the equality is achieved when $\{w_k\}_{k=1}^{K}$ are set according to \eqref{omega}, which enforces $\sum_{k\in\mathcal{K}}\left|   \frac{1}{\sqrt{\eta}}{\bm{m}}^{\sf{H}}(\bm  h_{\mathrm{d},k}+\bm G \bm\Theta \bm  h_{\mathrm{r},k} ){w}_k -1\right|^2$ to be zero. 
} to determine the transmit scalars of IoT devices  %$\forall ~ k \in \mathcal{K}$.
\begin{align}\label{omega}
w_k^{\star}=\sqrt{\eta}{{(\bm{m}^{\sf{H}}(\bm  h_{\mathrm{d},k}+\bm G \bm\Theta \bm  h_{\mathrm{r},k} ))^{\sf{H}}}\over{\|\bm{m}^{\sf{H}}(\bm  h_{\mathrm{d},k}+\bm G \bm\Theta \bm  h_{\mathrm{r},k} )\|^2}}, ~ \forall ~ k \in \mathcal{K}.
\end{align}
%\emph{Lemma 1 (Optimal Transmitter Scalar) :}
%Given an arbitrarily receive beamforming vector $\bm a$ and phase-shift matrix $\bm \Theta$ the computation error measured by MSE is minimized by the following zero-forcing transmitter:
%\begin{align}\label{omega}
%w_k^{\star}=\sqrt{\eta}{{(\bm{m}^{\sf{H}}(\bm  h_{\mathrm{d},k}+\bm G \bm\Theta \bm  h_{\mathrm{r},k} ))^{\sf{H}}}\over{\|\bm{m}^{\sf{H}}(\bm  h_{\mathrm{d},k}+\bm G \bm\Theta \bm  h_{\mathrm{r},k} )\|^2}},
%\end{align}
%where $\bm m$ denotes the normalized receiving beamforming vector, and $\eta$ is a power control factor that is chosen to satisfy the transmit power constraint, and the connection between $\bm m$ and $\bm a$ can be given as $\sqrt{\eta} \bm a = \bm m $.
%\begin{proof}
%Please refer to Appendix A for the proof.      
%%The proof is similar to the lemma 1 in \cite{WPT}.
%\end{proof}
Due to the maximum transmit power constraint of IoT devices, i.e., $|w_k|^2\leq P$, $\eta$ can be set as
\begin{align}\label{eta}
\eta=P\min_{k\in\mathcal{K}} \|\bm{m}^{\sf{H}}(\bm  h_{\mathrm{d},k}+\bm G \bm\Theta \bm  h_{\mathrm{r},k} )\|^2.
\end{align}
Therefore, the MSE at the AP can be further written as
\begin{align}
\label{MSE}
{\sf{MSE}}(\bm m, \bm \Theta)={{\|\bm{m}\|^2\sigma^2}\over{P\min_{k\in\mathcal{K}} \|\bm{m}^{\sf{H}}(\bm  h_{\mathrm{d},k}+\bm G \bm\Theta \bm  h_{\mathrm{r},k} )\|^2}}.
\end{align}

%To minimize the estimation error, we need to jointly optimize the transmit scalar $w_k$, the receive beamforming vector $\bm m$, power normalizing factor $\eta$, and the phase-shift matrix $\bm \Theta$.
%For simplicity, we adopt uniforming forcing design \cite{chen2018uniform} for the transmit scalar $w_k$, i.e.,

The MSE given in \eqref{MSE} is determined by the transmit signal-to-noise ratio (SNR) $\frac{P}{\sigma^2}$, the receive beamforming vector $\bm m$ at the AP, and the composite channel coefficients $\bm  h_{\mathrm{d},k}+\bm G \bm\Theta \bm  h_{\mathrm{r},k}$, $\forall ~ k$.

\begin{remark}
\emph{
        According to \eqref{MSE}, the MSE of AirComp is bottlenecked by the worst channel between the IoT devices and the AP.
        Without RIS, i.e., $\bf \Theta = 0$, the channel quality is only determined by the direct link.
        In this case, we can only adjust the transmit power of IoT devices to tackle the detrimental effects of severe channel fading and path loss.
%However, this may not always be practical as the maximum transmit power of IoT devices is limited.
With the assistance of RIS, the composite channel condition of each link (e.g., $\bm  h_{\mathrm{d},k}+\bm G \bm\Theta \bm  h_{\mathrm{r},k}$) can be adaptively adjusted by reconfiguring the phase-shift matrix $\bm\Theta$, which is able to enhance the channel quality of the link with the worse channel condition. 
As a result, the ability of RIS to reconfigure the propagation environment can be exploited to effectively mitigate the performance bottleneck of AirComp.}
\end{remark}

Our goal is to jointly optimize the phase-shift matrix $\bm \Theta$ and the receive beamforming vector $ \bm m $ to minimize ${\sf{MSE}}(\bm m, \bm \Theta)$ in \eqref{MSE}. 
%Hence, the formulated problem can be expressed as
It is formulated as the following problem
\begin{equation}
\label{ori_pro}
        \begin{split}
                \underset{\bm m,\bm \Theta}{\min}  ~ & \frac{\|\bm m\|^2 \sigma^2 }{P\min_{k\in\mathcal{K}}\|\bm m^{\sf H}(\bm  h_{\mathrm{d},k}+\bm G \bm\Theta \bm  h_{\mathrm{r},k} )\|^2} \\
                \text { s.t. } ~ & 0 \leq \theta_i < 2\pi, \forall ~  i. % =1,\ldots,N.
        \end{split}
\end{equation}
Problem \eqref{ori_pro} can be equivalently expressed as
\begin{equation}\label{eq:ori}
        \begin{split}
                \underset{\bm m,\bm \Theta}{\min} &~ \max_{k\in \mathcal{K}}
                \frac{\|\bm m\|^2 \sigma^2 }{P\|\bm m^{\sf H}(\bm  h_{\mathrm{d},k}+\bm G \bm\Theta \bm  h_{\mathrm{r},k} )\|^2} \\
                \text { s.t. } &~ 0 \leq \theta_i < 2\pi, \forall ~  i. %=1,\ldots,N.
        \end{split}
\end{equation}

Since $\bm \Theta$ is a diagonal matrix and $\bm  h_{\mathrm{r},k}$ is a vector, we rewrite $\bm  h_{\mathrm{d},k}+\bm G \bm\Theta \bm  h_{\mathrm{r},k} $ as $\bm  h_{\mathrm{d},k}+\bm G \text{diag}(\bm  h_{\mathrm{r},k})\bm v$, where $\bm v=[v_1, v_2, \ldots, v_N] $ with $v_i = e^{j\theta_i},~ \forall ~ i $. % $^{\sf T}=[e^{j\theta_1},e^{j\theta_2},\ldots,e^{j\theta_N}]^{\sf T}$.
As $\frac{\sigma^2}{P}$ is a constant, problem \eqref{eq:ori} is further equivalent to the following problem in the sense of having the same optimal solution
\begin{equation}
\label{eq:p0}
        \begin{split}
                \underset{\bm m,\bm v}{\min} &~ \max_k\frac{\|\bm m\|^2}{\|\bm m^{\sf H}(\bm  h_{\mathrm{d},k}+\bm G \text{diag}(\bm  h_{\mathrm{r},k})\bm v )\|^2}  \\
                \text { s.t. } &~ |v_i| = 1, \forall ~  i . %=1,\ldots,N.
        \end{split}
\end{equation}

%Problem \eqref{eq:p0} is a non-convex and highly \textcolor{red}{intractable to handle.}

\subsection{Limitations of State-of-the-Art Methods}
Most of the existing studies \cite{fu2020reconfigurable,jiang2019,zhibin} on the joint design in RIS-assisted wireless networks adopted the alternating SDR and alternating DC algorithms.
According to \cite{jiang2019}, problem \eqref{eq:p0} can be equivalently transformed to problem \eqref{key} as follows
\begin{equation}\label{key}
        \begin{split}
                \underset{\bm m,\bm v}{\min} ~ & \|\bm m \|^2\\
                \text { s.t. } ~ &|\bm m^{\sf H}(\bm G \text{diag}(\bm  h_{\mathrm{r},k})\bm v+\bm  h_{\mathrm{d},k}|^2\ge 1,\forall ~  k,\\
                \qquad & |v_i|=1,~\forall ~  i.
        \end{split}
\end{equation}
Problem \eqref{key} can be tackled by alternately solving the following two subproblems
\begin{equation}
        \begin{split}
\label{sdp1}
\underset{\bm m}{\min}&~\|\bm m \|^2\\
\text { s.t. }&~|\bm m^{\sf H}\bm h_k^e|^2\ge 1,\forall ~  k,
\end{split}
\end{equation}
and
\begin{equation}
\begin{split}
\label{sdp2}
\operatorname{find}&~\bm v\\
\text { s.t. }&~|\bm a^{\sf H}_k\bm v+c_k |^2\ge 1,\forall ~  k,\\
&~|v_i|^2=1,\forall ~  i,
\end{split}
\end{equation}
%\begin{subequations}
%\begin{align}
%\begin{split}
%\label{sdp1}
%\underset{\bm m}{\min}&~\|\bm m \|^2\\
%\text { s.t. }&~|\bm m^{\sf H}\bm h_k^e|^2\ge 1,\forall ~  k,
%\end{split} \\
%\begin{split}
%\label{sdp2}
%\text{and}
%\operatorname{find}&~\bm v\\
%\text { s.t. }&~|\bm a^{\sf H}_k\bm v+c_k |^2\ge 1,\forall ~  k,\\
%&~|v_i|^2=1,\forall ~  i,
%\end{split}
%\end{align}    
%\end{subequations}
where $\bm h_k^e = \bm G \text{diag}(\bm  h_{\mathrm{r},k})\bm v+\bm h_{\mathrm{d},k}$, $ \bm a^{\sf H}_k =  \bm m^{\sf H}\bm G\text{diag}(\bm  h_{\mathrm{r},k}) $, and $c_k = \bm m^{\sf H}\bm h_{\mathrm{d},k}$.
The two non-convex quadratically constrained quadratic programming (QCQP) problems, i.e., \eqref{sdp1} and \eqref{sdp2}, were then converted into two SDP problems with rank-one constraints by using the matrix lifting approach.
An intuitive solution is to apply the SDR technique \cite{Tom_luo_sdr} to convexify the problems by directly dropping the rank-one constraints, yielding the alternating SDR algorithm \cite{zhibin}. 
As an alternative, based on the fact that the rank-one constraint of a positive definite matrix is equivalent to the zero-difference between the spectral norm and trace norm,
a DC technique \cite{DC} was proposed to tackle the rank-one constraint, yielding the alternating DC algorithm \cite{jiang2019}.
The existing studies that adopted the aforementioned framework suffer from two limitations, i.e., non-guaranteed convergence and high computation complexity.
Specifically,
the optimization of phase-shift vector $\bm v$ involves a feasibility detection problem, i.e., \eqref{sdp2}.
According to the analysis in \cite{Yu2020}, both the alternating SDR and alternating DC algorithms are not guaranteed to converge.
Besides, AirComp is expected to support wireless data aggregation in high-density IoT networks, where the number of IoT devices would be large.
However, the aforementioned state-of-the-art methods, i.e., alternating DC and alternating SDR  algorithms, are not scalable because of their high computational complexity.
Especially for the alternating DC algorithm, a series of SDP problems generated by the SCA technique need to be solved by the standard interior-point method \cite{Nesterov_interior} at each alternating iteration.
The computation time consumption would be an unaffordable burden when the aforementioned algorithms are applied to solve large-scale optimization problems. 
The limitations of the existing studies motivate us to develop a low-complexity algorithm with a theoretical convergence guarantee for RIS-assisted AirComp systems.

\subsection{Problem Transformation}

To mitigate the aforementioned limitations, in this subsection, we reformulate problem \eqref{eq:p0} as a min-max optimization problem, which is presented in the following proposition.

\begin{proposition}
\label{equi}
\emph{
Problem \eqref{eq:p0} is equivalent to the following min-max QCQP problem:
\begin{equation}\label{eq:p1}
        \begin{split}
                \underset{\bm m,\bm v}{\min} &~ \max_k \Big\{-\|\bm m^{\sf H}(\bm  h_{\mathrm{d},k}+\bm G \mathrm{diag}(\bm  h_{\mathrm{r},k})\bm v )\|^2\Big\} \\
                \text { s.t. } &~ |v_i| = 1,\forall ~  i,\\ %=1,\ldots,N,\\
                 &~ \|\bm m\|^2 = 1.
        \end{split}
\end{equation}
}
\end{proposition}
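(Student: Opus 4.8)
The plan is to exploit the invariance of the objective of \eqref{eq:p0} under scaling of $\bm m$, and then to reduce the ratio objective to the min-max form of \eqref{eq:p1} by a monotone transformation. First I would introduce $f_k(\bm m,\bm v) := \|\bm m^{\sf H}(\bm h_{\mathrm{d},k}+\bm G\,\mathrm{diag}(\bm h_{\mathrm{r},k})\bm v)\|^2$ and observe that both $\|\bm m\|^2$ and $f_k$ are homogeneous of degree two in $\bm m$; hence the ratio $\|\bm m\|^2/f_k$ is invariant under $\bm m \mapsto \alpha\bm m$ for any nonzero $\alpha\in\mathbb{C}$. Consequently, restricting the minimization to the unit sphere $\|\bm m\|^2=1$ does not change the optimal value: given any feasible $\bm m\neq\bm 0$, its normalization $\bm m/\|\bm m\|$ attains the same objective, so the optimizers of \eqref{eq:p0} and of the sphere-constrained problem are in one-to-one correspondence up to scaling (the point $\bm m=\bm 0$ is never optimal, since it renders the objective undefined).

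Next, under $\|\bm m\|^2=1$ the numerator equals one, so the objective of \eqref{eq:p0} becomes $\max_k 1/f_k(\bm m,\bm v)$. Using the identity $\max_k 1/f_k = 1/\min_k f_k$, which is valid whenever every $f_k>0$, the problem is equivalent to minimizing $1/\min_k f_k$. I would then apply the strictly decreasing map $t\mapsto 1/t$ on $(0,\infty)$: minimizing $1/\min_k f_k$ is equivalent to maximizing $\min_k f_k$, and hence to minimizing $-\min_k f_k=\max_k\{-f_k\}$. This last expression is precisely the objective of \eqref{eq:p1}. Since the feasible sets agree — the unit-modulus constraints $|v_i|=1$ are identical and the sphere constraint $\|\bm m\|^2=1$ has been imposed in both — the two problems share the same set of optimizers, which establishes the claimed equivalence.

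The step requiring the most care is the degeneracy handling around the reciprocal manipulations. The identity $\max_k 1/f_k=1/\min_k f_k$ and the monotone transformation both presuppose $f_k>0$ for all $k$, i.e.\ that the receive beamformer $\bm m$ is not orthogonal to any composite channel $\bm h_{\mathrm{d},k}+\bm G\,\mathrm{diag}(\bm h_{\mathrm{r},k})\bm v$. I would dispatch this by noting that any $(\bm m,\bm v)$ with some $f_k=0$ drives the objective of \eqref{eq:p0} to $+\infty$ and therefore cannot be optimal, so it suffices to carry out the transformations on the region where all $f_k$ are strictly positive. With this caveat, the chain of equivalences is exact and the optimizers of the two problems coincide up to the scaling of $\bm m$.
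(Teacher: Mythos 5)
Your proof is correct and takes essentially the same route as the paper's: both exploit the degree-two homogeneity of the objective in $\bm m$ to normalize onto the unit sphere (the paper does this via an auxiliary variable $\tau=\|\bm m\|^2$ and the substitution $\hat{\bm m}=\bm m/\sqrt{\tau}$, you do it directly by scale invariance) and then use the monotone reciprocal map to pass from the min-max ratio to the min-max of negated quadratics. Your explicit handling of the degenerate case $f_k=0$ is a small addition in rigor that the paper leaves implicit, but it does not change the argument.
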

\begin{proof}
  Please refer to Appendix A.   
\end{proof}
As can be observed from \eqref{eq:p1}, both optimization variables $\bm m$ and $\bm v$ are involved in the objective function.
This transformation enables us to eliminate the feasibility detection problem as in \cite{jiang2019}, and allows us to exploit the monotonicity of the objective function during alternating minimization.
%The transformation enables us to eliminate the feasibility detection problem compared to \cite{jiang2019} while we adopt the alternate optimization method for solving $\bm m$ and $\bm v$, which will be revealed in the next section.
%Besides, we can build the monotonicity of the objective function when the later proposed algorithm be applied under this framework.
Problem \eqref{eq:p1} is still a challenging non-convex optimization problem. 
Specifically, solving problem \eqref{eq:p1} faces the following three challenges.
First, the optimization variables $\bm v$ and $\bm m$ are coupled in the objective function.
Second, the unit-modulus constraint on $v_i,~ \forall ~ i$ and the unit receive power constraint on $\bm m$ are non-convex.
Third,  the pointwise maximum of quadratic terms, i.e., the objective function,  is non-convex and non-smooth. 
To tackle these issues, we shall propose an alternating minimization method in conjunction with SCA to solve problem \eqref{eq:p1} in the following section.

\section{Proposed AlterMin SCA Framework}
In this section, we propose an alternating minimization method to alternately optimize the receive beamforming vector $\bm m$ and the phase-shift vector $\bm v$, resulting in two non-convex subproblems with respect to $\bm m$ and $\bm v$, respectively.
We then construct convex approximations for the two yielded subproblems by using SCA.
%And we develop a SCA algorithm to solve the two generated subproblems.
%In this section, we propose to alternately optimize $\bm m$ and $\bm v$ with the other fixed. 
%As discussed in last section, they can be cast to the identical form \eqref{general_pro}.
%To solve the two problem \eqref{m:real:quad} and problem \eqref{v:real:quad}, we will first apply successive convex approximation technique to approximate the nonconvex objective function.
%Specifically, we iteratively replace the concave function by its linear approximation, which 
%       is also an upper bound of the concave function.
%The piece-maximum of linear functions is a non-smooth but convex function.
%It means that the subproblem is a non-smooth convex problem.
%Though which can be solved directly by some convex programming tools, like CVX.
%However, the efficiency cannot be guarantee \cite{Fast}.
%And there existing nonsmooth convex optimization algorithm like subgradient descent, mirror descent method can work on it, but the convergence rate is only $\frac{1}{\sqrt{t}}$ \cite{firstorder}.
%We then propose to apply Mirror-Prox algorithm to solve problem \eqref{m:real:quad} and problem \eqref{v:real:quad} which has better convergence performance compare to the aforementioned algorithms.
%Next, we introduce the implementation details of the Mirror-Prox algorithm.
%In the final of this section, we summarize the proposed Alternating Mirror-Prox Based SCA Algorithm and analyse its computation complexity. 

\subsection{Phase-Shift Vector Optimization}
\label{phase-shift-opt}
When the receive beamforming vector $\bm m$ is fixed, problem \eqref{eq:p1} is reduced to the following subproblem that requires the optimization of phase-shift vector $\bm v$
\begin{equation}\label{sub2}
        \begin{split}
                \underset{\bm v}{\min} &~ \max_k ~ \left\{-\|\bm m^{\sf H}(\bm  h_{\mathrm{d},k}+\bm G \text{diag}(\bm  h_{\mathrm{r},k})\bm v )\|^2 \right\} \\
                \text { s.t. } &~ |v_i| = 1,\forall ~  i=1,\ldots,N.
       \end{split}
\end{equation}
By denoting $c_k = \bm m^{\sf H}\bm  h_{\mathrm{d},k}$ and $\bm a_k^{\sf H} = \bm m^{\sf H}\bm G \text{diag}(\bm  h_{\mathrm{r},k})$,
problem \eqref{sub2} can be rewritten as 
\begin{equation}\label{sub2:neg}
        \begin{split}
             \underset{\bm v}{\min} &~ \max_k \left\{-\|c_k + \bm a_k^{\sf H}\bm v \|^2 \right\} \\
             \text { s.t. } &~ |v_i| = 1,\forall ~  i.
        \end{split}
\end{equation}
%By convention, we are custom to deal with the min-max form problem.
%Therefore, we further transform problem \eqref{re:sub2} to its equivalent problem in the min-max form as follows 
%\[
%\begin{aligned}
%        \begin{split}
%                \underset{\bm {v}}{\max} &~ \min_k \|c_k + \bm a_k^{\sf H}\bm v \|^2 \\
%                \text { s.t. } &~ |v_i| = 1,\forall ~  i.
%       \end{split} & \Longleftrightarrow &
%         \begin{split}
%                \underset{\bm {v}}{\min} &~ \Big\{-\min_k \|c_k + \bm a_k^{\sf H}\bm v \|^2 \Big\} \\
%                \text { s.t. } &~ |v_i| = 1,\forall ~  i.
%       \end{split} \\
%       & \Longleftrightarrow &
%        \begin{split}
%             \underset{\bm v}{\min} &~ \max_k \left\{-\|c_k + \bm a_k^{\sf H}\bm v \|^2 \right\} \\
%             \text { s.t. } &~ |v_i| = 1,\forall ~  i.
%        \end{split}
%\end{aligned}
%\]
For simplicity of algorithm design, we further convert problem \eqref{sub2:neg} from the complex domain to the real domain.
By denoting $\tilde{\bm v} = [\Re\{\bm v\}^{\sf T},\Im \{\bm v\}^{\sf T}]^{\sf T} \in \mathbb{R}^{2N}$, we obtain the following problem 
\begin{equation}\label{v:real0:quad}
        \begin{split}
                \underset{\tilde{\bm v}}{\min} &~ \max_k \Big\{ \tilde{\bm v}^{\sf T}\tilde{\mathbf{A}}_{k}\tilde{\bm v} -
                 2\tilde{\bm v}^{\sf T} \bm b_k - |c_k|^2\Big\} \\ % -|c_k|^2 
                \text { s.t. } &~ {\tilde{v}}_i^2+{\tilde{v}}_{i+N}^2 = 1,\forall ~ i,
       \end{split}
\end{equation}
where $ \bm b_k = [\Re\{c_k\bm a_k\}^{\sf T},\Im\{c_k\bm a_k\}^{\sf T}]^{\sf T}$ and
\begin{equation}
\begin{aligned}
% \bm b_k =& [\Re\{c_k\bm a_k\}^{\sf T},\Im\{c_k\bm a_k\}^{\sf T}]^{\sf T}, \\
\tilde{\mathbf{A}}_{k} =
        \begin{bmatrix}
                \Re\left\{-\bm a_k \bm a_k^{\sf H}\right\}& \qquad  &-\Im \left\{-\bm a_k \bm a_k^{\sf H}\right\} \\
        \Im \left\{-\bm a_k \bm a_k^{\sf H}\right\}& \qquad  & \Re\left\{-\bm a_k \bm a_k^{\sf H}\right\}  
        \end{bmatrix}.
\end{aligned} \nonumber
\end{equation}
Problem \eqref{v:real0:quad} aims to minimize the pointwise maximum of concave quadratic terms.
Solving problem \eqref{v:real0:quad} is challenging due to the non-convex constraints and the non-convex objective function.
In the following, we tackle the non-convex constraint by utilizing the convex relaxation technique.
Specifically, we relax the unit modulus constraint to $\tilde{\bm v} \in \mathcal{V}$,
where $\mathcal{V} = \{\tilde{\bm v} \mid {\tilde{v}}_i^2+{\tilde{v}}_{i+N}^2 \leq 1,\forall ~  i=1,\ldots,N\}$, yielding the following relaxed problem
\begin{equation}\label{v:real:quad}
        \begin{split}
                \underset{\tilde{\bm v}}{\min} &~ \max_k \left\{ \tilde{\bm v}^{\sf T}\tilde{\mathbf{A}}_{k}\tilde{\bm v} -
                 2\tilde{\bm v}^{\sf T} \bm b_k -|c_k|^2\right\} \\ % -|c_k|^2 
                \text { s.t. } &~ \tilde{\bm v} \in \mathcal{V}.
       \end{split}
\end{equation}
On the other hand, the SCA technique is applied to tackle the non-convexity of the objective function 
$\max_k \Big\{ \tilde{\bm v}^{\sf T}\tilde{\mathbf{A}}_{k}\tilde{\bm v} -
                 2\tilde{\bm v}^{\sf T} \bm b_k -|c_k|^2\Big\}$.
                 In particular, due
%\subsubsection{SCA for Phase-Shift Optimization}
%\label{sca_text}
%As the objective function is non-convex,
%we adopt an iterative procedure to approximate the original non-convex
%objective function by convex function, and then solve a sequence of convex problems, such method is SCA.
%SCA is widely used technique for building local convexity for non-convex optimization problems.
%2-26
%The principle of SCA method is to replace the original non-convex function by properly chosen convex surrogate based on the current iterate, and then solve the formulated subproblem.
%The major procedure of SCA for problem \eqref{v:real:quad} is presented as follows.
%Due 
to the concavity of $\{\tilde{\bm v}^{\sf T}\tilde{\mathbf{A}}_{k}\tilde{\bm v} -
                 2\tilde{\bm v}^{\sf T} \bm b_k -|c_k|^2\}$, we construct its linear upper bound based on the first-order Taylor approximation as
\[
\tilde{\bm v}^{\sf T}\tilde{\mathbf{A}}_{k}\tilde{\bm v} -
                 2\tilde{\bm v}^{\sf T} \bm b_k -|c_k|^2 
                  \leq 
                 \left(\mathbf{p}_{k}^{(n)}\right)^{\sf T} \tilde{\bm v} + q_{k}^{(n)},
\]
where $\mathbf{p}_{k}^{(n)} = 2(\tilde{\mathbf{A}}_{k}\tilde{\bm v}^{(n)} -
                 \bm b_k), ~
        q_{k}^{(n)} = - (\tilde{\bm v}^{(n)})^{\sf T}\tilde{\mathbf{A}}_{k}\tilde{\bm v}^{(n)}-|c_k|^2,$ and $\tilde{\bm v}^{(n)}$ is the solution obtained at the $n$-th iteration.
As a result, we have
\begin{equation}
\label{first}
\max_k \left\{ \tilde{\bm v}^{\sf T}\tilde{\mathbf{A}}_{k}\tilde{\bm v} -
                 2\tilde{\bm v}^{\sf T} \bm b_k -|c_k|^2 \right\} 
                 \leq
                 \max_k \left\{  \left(\mathbf{p}_{k}^{(n)}\right)^{\sf T} \tilde{\bm v} + q_{k}^{(n)}\right\}.
\end{equation}
Therefore, at the $(n+1)$-th iteration, we can replace the non-convex objective function in problem \eqref{v:real:quad} by its convex surrogate $\max_k ~ \left\{  \left(\mathbf{p}_{k}^{(n)}\right)^{\sf T} \tilde{\bm v} + q_{k}^{(n)}\right\}$.
Specifically, at the $(n\!+\!1)$-th iteration, problem \eqref{v:real:quad} is approximated by the following subproblem 
\begin{equation}\label{saddle:ori2}
        \begin{split}
                \underset{\tilde{\bm v}}{\min} ~& \max_k ~  \left(\mathbf{p}_{k}^{(n)}\right)^{\sf T} \tilde{\bm v} + q_{k}^{(n)} \\
                \text { s.t. } ~& \tilde{\bm v} \in \mathcal{V}.
       \end{split}
\end{equation}
To efficiently solve problem \eqref{v:real:quad}, we resort to iteratively solve its non-smooth convex approximation problem \eqref{saddle:ori2}.
%The overall SCA algorithm for solving problem \eqref{v:real:quad} is summarized in Algorithm \ref{SCA_V_a}.
%And the convergence property of the SCA algorithm for such problem have been drawn in \cite{Fast}. 
%\begin{algorithm} \label{SCA_V_a}
%\caption{SCA method for problem \eqref{v:real:quad}}
%\SetKwRepeat{Repeat}{repeat}{until}
%%\SetKwFor{For}{For}{end}
%\SetAlgoLined
%\KwIn{Initia point $\tilde{\bm v}^{(0)} \in \mathcal{V}$, and threshold $\epsilon$, and set $n:=0$;}
%%\KwOut{BB}
%\Repeat{Decrease of the objective value is less than threshold $\epsilon$}
%{
%       $
%       \tilde{\bm v}^{(n+1)} = \argmin\limits_{\tilde{\bm v} \in \mathcal{V}} \! ~ \! \max\limits_k ~  \left(\mathbf{p}_{k}^{(n)}\right)^{\sf T} \tilde{\bm v} + q_{k}^{(n)}
%       $; \\
%       Update $\mathbf{p}_{k}^{(n+1)}$ and $q_{k}^{(n+1)}$; \\ %  by \eqref{p_q}
%       $n \leftarrow n+1$;
%}
%\KwOut{Take $\tilde{\bm v}^{(n+1)}$ as an approximation solution.}
%\end{algorithm}

\subsection{Receive Beamforming Vector Optimization}

When the phase-shift vector $\bm v$ is fixed, 
\eqref{eq:p1} can be formulated as an optimization problem with respect to the receive beamforming vector $\bm m$ as follows
\begin{equation}\label{sub1}
        \begin{split}
                \underset{\bm m}{\min} &~ \max_k \left\{-\|\bm m^{\sf H}(\bm  h_{\mathrm{d},k}+\bm G \text{diag}(\bm  h_{\mathrm{r},k})\bm v)\|^2 \right\} \\
                \text { s.t. } &~ \|\bm m\|^2 = 1.
        \end{split}
\end{equation}
By denoting $\bm{h}_{k} = \bm  h_{\mathrm{d},k}+\bm G \text{diag}(\bm  h_{\mathrm{r},k})\bm v$, \eqref{sub1} can be represented as 
\begin{equation}\label{re:sub1}
        \begin{split}
             \underset{\bm m}{\min} &~ \max_k ~ \left\{-\|\bm m^{\sf H}\bm{h}_{k}\|^2 \right\} \\
                \text{s.t.} &~ \|\bm m\|^2 = 1.
        \end{split}
\end{equation}
The constraint in \eqref{re:sub1} is non-convex.
According to \cite{equivalent}, 
%there exists a convex constraint that has an equivalent effect as the non-convex constraint in problem \eqref{re:sub1}.
%Specifically,
problem \eqref{re:sub1} is equivalent to the following problem 
\begin{equation}
\label{re:sub1:ineq}
        \begin{split}
             \underset{\bm m}{\min} &~ \max_k ~ \left\{-\|\bm m^{\sf H}\bm{h}_{k}\|^2 \right\} \\
                \text{s.t.}~& \|\bm m\|^2 \leq 1.
        \end{split}
\end{equation}
This is because that the constraint should be met with equality at the
optimal point for problem \eqref{re:sub1:ineq}.
Otherwise, $\bm m$ could be scaled up to reduce the objective value, thereby contradicting the optimality.
%Similar to the transformation of max-min to min-max for the phase-shift vector optimization, we have 
%\begin{equation}
%\label{transform_for_m}
%\begin{aligned}
%        \begin{split}
%             \underset{\bm m}{\max} ~& \min_k  \|\bm m^{\sf H}\bm{h}_{k}\|^2\\
%                \text { s.t. } ~& \|\bm m\|^2 \leq 1.
%        \end{split} \Longleftrightarrow 
%        \begin{split}
%             \underset{\bm m}{\min} ~& \max_k  -\|\bm m^{\sf H}\bm{h}_{k}\|^2\\
%                \text { s.t. } ~&\|\bm m\|^2 \leq 1.
%        \end{split} 
%\end{aligned}
%\end{equation}
By defining $\tilde{\bm m} = [\Re\{\bm m\}^{\sf T},\Im \{\bm m\}^{\sf T}]^{\sf T} \in \mathbb{R}^{2M}$, we convert problem \eqref{re:sub1:ineq} from the complex domain to the real domain to facilitate the algorithm design  
\begin{equation}\label{m:real:quad}
        \begin{split}
             \underset{\tilde{\bm m}}{\min} ~ & \max_k  ~ \tilde{\bm m}^{\sf T}\tilde{\mathbf{H}}_{k} \tilde{\bm m}\\
                \text { s.t. } ~& \tilde{\bm m} \in \mathcal{M},
        \end{split}
\end{equation}
\noindent where $\mathcal{M} = \{\tilde{\bm m} \mid \|\tilde{\bm m}\|^2 \leq 1\}$ and
\begin{equation}
\tilde{\mathbf{H}}_{k}=\begin{bmatrix}
\Re\left\{-\bm{h}_{k}\bm{h}_{k}^{\sf H}\right\} & \qquad  & -\Im \left\{-\bm{h}_{k}\bm{h}_{k}^{\sf H}\right\} \\
\Im \left\{-\bm{h}_{k}\bm{h}_{k}^{\sf H}\right\} & \qquad  & \Re\left\{-\bm{h}_{k}\bm{h}_{k}^{\sf H}\right\}
\end{bmatrix}. \nonumber
\end{equation} 
%As a result, problem \eqref{re:sub1:ineq} can be rewritten as 
%\begin{equation}\label{ori:saddle1}
%        \begin{split}
%             \underset{\bm m}{\min} ~& \max_k ~ \bm m^{\sf H} \mathbf{H}_{k} \bm m\\
%                \text { s.t. } ~& \|\bm m\|^2 \leq 1,
%        \end{split}
%\end{equation}
%\noindent where $\mathbf{H}_{k} = -\bm{h}_{k}\bm{h}_{k}^{\sf H}$.
%By defining $\tilde{\bm m} = [\Re\{\bm m\}^{\sf T},\Im \{\bm m\}^{\sf T}]^{\sf T} \in \mathbb{R}^{2M}$,
%problem \eqref{ori:saddle1} can be equivalently transformed to 
%\begin{equation}\label{m:real:quad}
%        \begin{split}
%             \underset{\tilde{\bm m}}{\min} ~ & \max_k  ~ \tilde{\bm m}^{\sf T}\tilde{\mathbf{H}}_{k} \tilde{\bm m}\\
%                \text { s.t. } ~& \tilde{\bm m} \in \mathcal{M},
%        \end{split}
%\end{equation}
%where $\mathcal{M} = \{\tilde{\bm m} \mid \|\tilde{\bm m}\|^2 \leq 1\}$ and
%\begin{equation}
%\tilde{\mathbf{H}}_{k}=\begin{bmatrix}
%\Re\left\{\mathbf{H}_{k}\right\} & \qquad  & -\Im \left\{\mathbf{H}_{k}\right\} \\
%\Im \left\{\mathbf{H}_{k}\right\} & \qquad  & \Re\left\{\mathbf{H}_{k}\right\}
%\end{bmatrix}. \nonumber
%\end{equation} 
To tackle the non-convexity of the objective function of problem \eqref{m:real:quad}, we shall apply SCA to construct its local convex approximation. 
Due to the similar structure of problem \eqref{v:real:quad} and problem \eqref{m:real:quad},
the derivation here is similar to that presented in Section \ref{phase-shift-opt}.
For completeness, we sketch the main procedures for solving problem \eqref{m:real:quad}.  
Starting from an initial point $\tilde{\bm m}^{(0)} \in \mathcal{M}$, SCA is applied to generate a sequence of solutions $\{\tilde{\bm m}^{(n)}\}$ as follows.
With the approximated solution $\tilde{\bm m}^{(n)}$ obtained at the $n$-th iteration, the concave quadratic function $\tilde{\bm m}^{\sf T}\tilde{\mathbf{H}}_{k} \tilde{\bm m}$ can be upper bounded by its linear majorization.
Specifically, we have the following inequality 
\[
\tilde{\bm m}^{\sf T}\tilde{\mathbf{H}}_{k} \tilde{\bm m} \leq (2\tilde{\mathbf{H}}_{k} \tilde{\bm m}^{(n)})^{\sf T}\tilde{\bm m} - (\tilde{\bm m}^{(n)})^{\sf T}\tilde{\mathbf{H}}_{k} \tilde{\bm m}^{(n)}.
\]
By denoting $\mathbf{\bar{p}}_{k}^{(n)} = 2\tilde{\mathbf{H}}_{k} \tilde{\bm m}^{(n)}$ and $\bar{q}_{k}^{(n)} = - (\tilde{\bm m}^{(n)})^{\sf T}\tilde{\mathbf{H}}_{k} \tilde{\bm m}^{(n)}$, we have
\begin{equation}
\max_k  ~ \left\{ \tilde{\bm m}^{\sf T}\tilde{\mathbf{H}}_{k} \tilde{\bm m} \right\}
\leq 
\max_k ~ \left\{ \mathbf{\bar{p}}_{k}^{(n) \sf T} \tilde{\bm m} + \bar{q}_{k}^{(n)}\right\}.
\end{equation}
%Note that $\max_k \left\{\mathbf{\bar{p}}_{k}^{(n) \sf T} \tilde{\bm m} \!+\! \bar{q}_{k}^{(n)}\right\}$ is a convex upper bound of $\max_k \left\{\tilde{\bm m}^{\sf T}\tilde{\mathbf{H}}_{k} \tilde{\bm m}\right\}$ based on $\tilde{\bm m}^{(n)}$.
Then, $\tilde{\bm m}^{(n+1)}$ can be obtained by solving the following non-smooth convex approximation problem
\begin{equation}
\label{SCA_m}
        \begin{split}
                \underset{\tilde{\bm m}}{\min} ~& \max_k ~ \mathbf{\bar{p}}_{k}^{(n) \sf T} \tilde{\bm m} \!+\! \bar{q}_{k}^{(n)} \\
                \text { s.t. }~&\tilde{\bm m} \in \mathcal{M}.
       \end{split}
\end{equation}
%
%\begin{equation}
%\label{SCA_m}
%        \begin{split}
%               \tilde{\bm m}^{(n+1)} =  \underset{\tilde{\bm m}}{\min} ~& \max_k ~ \mathbf{\bar{p}}_{k}^{(n) T} \tilde{\bm m} + \bar{q}_{k}^{(n)} \\
%                \text { s.t. } ~& \tilde{\bm m} \in \mathcal{M},
%       \end{split}
%\end{equation}

%The overall algorithm is summarized in Algorithm \eqref{SCA_Algo_M}.
%
%\begin{algorithm} \label{SCA_Algo_M}
%\caption{SCA method for problem \eqref{m:real:quad}}
%\SetKwRepeat{Repeat}{repeat}{until}
%%\SetKwFor{For}{For}{end}
%\SetAlgoLined
%\KwIn{Initialized point $\tilde{\bm m}^{(0)} \in \mathcal{M}$}
%%\KwOut{BB}
%\For{n=0,1,2,\ldots}{
%       $
%       \tilde{\bm m}^{(n+1)} = \argmin_{\tilde{\bm m} \in \mathcal{M}} \max_k \mathbf{\bar{p}}_{k}^{(n) T} \tilde{\bm m} \!+\! \bar{q}_{k}^{(n)}
%       $ \\
%       Update $\mathbf{\bar{p}}_{k}^{(n)}$ and $\bar{q}_{k}^{(n+1)}$ %  by \eqref{p_q}
%}
%\end{algorithm}
%
%\subsection{Overall Algorithm}

\subsection{Convergence Analysis}

We recall that problem \eqref{eq:p1} is decomposed into two problems \eqref{v:real:quad} and \eqref{m:real:quad} with respect $\tilde{\bm v}$ and $\tilde{\bm m}$, respectively, which are then alternately solved by using SCA.
Finally, we project $\tilde{\bm v}$ to $\mathcal{V}_1=\{\tilde{\bm v} \mid {\tilde{v}}_i^2+{\tilde{v}}_{i+N}^2 = 1,\forall ~  i\}$, so as to compensate the relaxation on $\tilde{\bm v}$.  
The overall AlterMin SCA algorithm for solving problem \eqref{eq:p1} is summarized in Algorithm 1. 
%There are two main procedures that is alternatively optimizing $\tilde{\bm m}$ and $\tilde{\bm v}$ by using SCA method.   
The convergence of Algorithm 1 is presented in the following proposition.
\begin{proposition}
\label{con}
\emph{
The convergence property of Algorithm 1 consists of two parts:\\
i) In the inner loop (Steps 4-7 and Steps 10-13), i.e., SCA iteration, the objective values of problems \eqref{v:real:quad} and \eqref{m:real:quad} achieved by sequences $\{\tilde{\bm v}^{(n)}_{(l)}\}_{n=0}^{\infty}$ and $\{\tilde{\bm m}^{(n)}_{(l)}\}_{n=0}^{\infty}$ establish non-increasing convergent sequences;\\
ii) In the outer loop (Steps 2-16), i.e., alternating minimization iteration, the objective value of problem \eqref{eq:p1} achieved by the sequence $\{\tilde{\bm v}^{(0)}_{(l)},\tilde{\bm m}^{(0)}_{(l)}\}_{l=0}^{\infty}$ establish a non-increasing convergent sequence.
}
\end{proposition}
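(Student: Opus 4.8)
The plan is to recognize both claims as instances of the same majorize--minimize (SCA) descent argument, and then to promote monotonicity into convergence by compactness. Throughout I write $F(\tilde{\bm m},\tilde{\bm v})=\max_k\{-\|\bm m^{\sf H}(\bm h_{\mathrm{d},k}+\bm G\,\mathrm{diag}(\bm h_{\mathrm{r},k})\bm v)\|^2\}$ for the objective functional common to \eqref{eq:p1}, \eqref{v:real:quad}, and \eqref{m:real:quad}; the two relaxed subproblems differ from \eqref{eq:p1} only through their enlarged yet still compact feasible sets $\mathcal{V}$ and $\mathcal{M}$, so the quantity being tracked is always this single continuous $F$.

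For part (i), I would first isolate the two defining properties of the SCA surrogate in \eqref{saddle:ori2}. Writing $\hat f^{(n)}(\tilde{\bm v})=\max_k\{(\mathbf{p}_k^{(n)})^{\sf T}\tilde{\bm v}+q_k^{(n)}\}$, the global upper bound $F(\tilde{\bm m},\tilde{\bm v})\le\hat f^{(n)}(\tilde{\bm v})$ is precisely \eqref{first}, while the tightness identity $F(\tilde{\bm m},\tilde{\bm v}^{(n)})=\hat f^{(n)}(\tilde{\bm v}^{(n)})$ follows by evaluating the per-$k$ first-order Taylor bound at the expansion point, where each affine majorant touches its concave quadratic and hence the two pointwise maxima agree. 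The descent is then the standard sandwich: since $\tilde{\bm v}^{(n+1)}$ minimizes $\hat f^{(n)}$ over $\mathcal{V}$ and $\tilde{\bm v}^{(n)}\in\mathcal{V}$ is feasible,
\[
F(\tilde{\bm m},\tilde{\bm v}^{(n+1)})\le\hat f^{(n)}(\tilde{\bm v}^{(n+1)})\le\hat f^{(n)}(\tilde{\bm v}^{(n)})=F(\tilde{\bm m},\tilde{\bm v}^{(n)}),
\]
so $\{F(\tilde{\bm m},\tilde{\bm v}^{(n)}_{(l)})\}_n$ is non-increasing; the verbatim construction for \eqref{SCA_m} handles the $\tilde{\bm m}$ sequence. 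Because $\mathcal{V}$ and $\mathcal{M}$ are compact and $F$ is continuous, $F$ is bounded below on each set, and a non-increasing sequence bounded below converges, which is part (i).

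For part (ii), I would chain the two block decreases supplied by part (i). Within outer iteration $l$, the $\bm v$-block runs SCA with $\tilde{\bm m}$ frozen and drives $F$ down from $F(\tilde{\bm m}^{(0)}_{(l)},\tilde{\bm v}^{(0)}_{(l)})$; the $\bm m$-block then runs SCA with the updated $\tilde{\bm v}$ frozen and lowers $F$ again. Since each block is warm-started at the terminal iterate of its predecessor and $F$ is evaluated with the same frozen variable across each hand-off, the inequalities concatenate to $F(\tilde{\bm m}^{(0)}_{(l+1)},\tilde{\bm v}^{(0)}_{(l+1)})\le F(\tilde{\bm m}^{(0)}_{(l)},\tilde{\bm v}^{(0)}_{(l)})$. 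Thus the outer sequence of objective values is non-increasing, and being bounded below on the compact set $\mathcal{M}\times\mathcal{V}$ it converges. The reported value coincides with the objective of \eqref{eq:p1} on the $\tilde{\bm m}$ coordinate because the relaxation $\|\bm m\|^2\le1$ is tight at optimality, as already argued below \eqref{re:sub1:ineq}.

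I expect the decisive technical point to be the tightness identity $F(\tilde{\bm m},\tilde{\bm v}^{(n)})=\hat f^{(n)}(\tilde{\bm v}^{(n)})$: it is this equality, not merely the upper bound, that converts the surrogate minimization into a genuine decrease of $F$, so I would verify it with care rather than assume it. The remaining delicate bookkeeping concerns the relaxation of the unit-modulus constraint: every iterate entering $F$ must be kept inside the compact relaxed region $\mathcal{V}$ so that no step breaks monotonicity, and the closing projection onto $\mathcal{V}_1$ that restores $|v_i|=1$ must be excluded from the monotone chain, since that projection is not guaranteed to decrease $F$. Handling this relaxation-versus-feasibility distinction consistently, rather than any single inequality, is where the argument most easily goes wrong.
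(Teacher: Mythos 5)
Your proposal is correct and follows essentially the same route as the paper's own proof: the sandwich inequality built from the surrogate's tightness at the expansion point, the global upper bound \eqref{first}, and the minimization step, followed by lower-boundedness of the continuous objective on the compact feasible set for the inner loop, and concatenation of the block decreases for the outer loop. Your additional remarks on excluding the final projection onto $\mathcal{V}_1$ from the monotone chain and on the tightness of the relaxation $\|\bm m\|^2 \leq 1$ are sound refinements of points the paper's proof leaves implicit, but they do not change the argument's structure.
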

%The proof can be found in Appendix B.
\begin{proof}
Please refer to Appendix B.
\end{proof}

%Problem \eqref{eq:p1} is decomposed into two subproblems with respect to $\bm v$ and $\bm m$, i.e., \eqref{re:sub2} and \eqref{re:sub1}, respectively. 
%To obtain a solution for problem \eqref{eq:p1}, we alternately solve problem \eqref{re:sub2} and problem \eqref{re:sub1}.
%Both of the two problems involve the max-min optimization problem. 

\begin{algorithm}[t] 
\label{altermin}
        \caption{AlterMin SCA for problem \eqref{eq:p1}}
        \SetKwRepeat{Repeat}{repeat}{until}
%\SetKwFor{For}{For}{end}
\SetAlgoLined
\KwIn{Initial point $\tilde{\bm v}^{(0)}_{(0)}$, $\tilde{\bm m}^{(0)}_{(0)}$ and threshold $\epsilon$;}
%\KwOut{BB}
Set: $l=0$;\\
\Repeat{the decrease of objective value $\Big\{-\|\bm m^{\sf H}(\bm  h_{\mathrm{d},k}+\bm G \mathrm{diag}(\bm  h_{\mathrm{r},k})\bm v )\|^2\Big\}$ less than $\epsilon$}{
Set: $n=0$;\\
\Repeat{the decrease of objective value of $\tilde{\bm v}^{\sf T}\tilde{\mathbf{A}}_{k}\tilde{\bm v} -
                 2\tilde{\bm v}^{\sf T} \bm b_k -|c_k|^2$ less than $\epsilon$}{
%$\bm z_0 = [\left(\tilde{\bm v}^{(0)}\right)^{\mathsf T},\left(\tilde{\bm y}^{(0)}\right)^{\mathsf T}]^{\mathsf T}$
Update $\tilde{\bm v}^{(n+1)}_{(l)}$ by solving problem \eqref{saddle:ori2};\\
$n \leftarrow n+1$;
}
Set: $\tilde{\bm v}^{(0)}_{(l+1)} = \tilde{\bm v}_{(l)}^{(n)}$;\\
Set: $n=0$;\\
\Repeat{the decrease of objective value of $\tilde{\bm m}^{\sf T}\tilde{\mathbf{H}}_{k} \tilde{\bm m}$ less than $\epsilon$}{
%$\bm z_0 = [\left(\tilde{\bm v}^{(0)}\right)^{\mathsf T},\left(\tilde{\bm y}^{(0)}\right)^{\mathsf T}]^{\mathsf T}$
Update $\tilde{\bm m}^{(n+1)}_{(l)}$ by solving problem \eqref{SCA_m};
\\
$n \leftarrow n+1$;
}
Set: $\tilde{\bm m}^{(0)}_{(l+1)} = \tilde{\bm m}^{(n)}_{(l)}$;\\
$l \leftarrow l+1$;
%\\${MSE}_{(l)}=\min_k \|{\bm m^{(n)}_{(l)}}^{\sf H}(\bm  h_{\mathrm{d},k}+\bm G \mathrm{diag}(\bm  h_{\mathrm{r},k}){\bm v^{(n)}_{(l)}})\|^2$
}
Project $\tilde{\bm v}$ to set $\mathcal{V}_1$.
\end{algorithm}

%\begin{algorithm}
%\SetAlgoLined
%\SetKwInput{KwInput}{Input}                % Set the Input
%\SetKwInput{KwOutput}{Output}              % set the Output
%\KwInput{initial point $\bm m_0$ and set $ j := 0$.}
% \For{j=1,2,3,\ldots}{
% update $\bm v^{j+1}$ by solving problem \eqref{v:real:quad}.\\
% update $\bm m^{j+1}$ by solving problem \eqref{m:real:quad}.
%    }
% \label{Alter}
% \caption{Alternating Minimization}
%\end{algorithm}
\subsection{Algorithm Discussion}
\label{discussion}
To efficiently solve problem \eqref{eq:p1}, we still need to design an efficient algorithm to solve problem \eqref{saddle:ori2} and problem \eqref{SCA_m}.
However, the objective function $\max_k \{ (\mathbf{p}_{k}^{(n)})^{\sf T} \tilde{\bm v} + q_{k}^{(n)}\}$ of problem \eqref{saddle:ori2} is convex but non-smooth.
Intuitively, the subgradient algorithm as a first-order method can be applied. However, it requires $\mathcal{O}\left(\frac{1}{\epsilon^2} \right)$ iterations to attain an $\epsilon$-optimal solution \cite{Bubeck2015}. 
Besides, the Nesterov's smoothing technique in conjunction with the accelerated gradient algorithm can also be employed to solve problem \eqref{saddle:ori2}.  It attains an $\epsilon$-optimal solution with $\mathcal{O}\left( \sqrt{\frac{1}{\epsilon \mu_s}} \right)$ iterations \cite{nesterov2005smooth}, where $\mu_s$ is the smoothness parameter.
Notice that the convergence performance of Nesterov's approach is very sensitive to the smoothness parameter (i.e., $\mu_s$), whose optimal value is, in general, difficult to determine.
In addition, by introducing an auxiliary variable $s$, problem \eqref{saddle:ori2} can be equivalently formulated as the following convex QCQP problem
\begin{equation}\label{socp}
        \begin{split}
                \underset{\tilde{\bm v}, s}{\min}  ~~& s \\
                \text { s.t. }  ~&   \left(\mathbf{p}_{k}^{(n)}\right)^{\sf T} \tilde{\bm v} + q_{k}^{(n)} - s  \leq 0, \forall ~  k, \\
                                ~& \tilde{\bm v} \in \mathcal{V}, ~ s \in \mathbb{R}.
       \end{split}
\end{equation}
Problem \eqref{socp} can be solved by using the interior-point method \cite{Nesterov_interior}, which attains an $\epsilon$-optimal solution with only $\mathcal{O}\left(\sqrt{N+K}\log \frac{2(N+K)}{\epsilon}\right)$ iterations. However, the time complexity of each iteration is $\mathcal{O}\left((N+K)N^{2} + N^3 \right)$ \cite{Nesterov_interior}.
As problem \eqref{saddle:ori2} and problem \eqref{SCA_m} have a similar form, the above analysis also applies to problem \eqref{SCA_m}.
Hence, all of the aforementioned algorithms cannot efficiently solve our problem when the number of optimization variables is large.
%A highly efficient algorithm for solving problem \eqref{saddle:ori2} should own both the high convergence rate and low complexity at each iteration.
This motivates us to exploit the underlying structure of problems \eqref{saddle:ori2} and \eqref{SCA_m} to develop a highly efficient algorithm with a fast convergence rate and low iteration cost in the following section.

\section{Mirror-Prox for Non-Smooth Convex Problems}

In this section, we aim to develop a low-complexity algorithm to solve the non-smooth convex problems \eqref{saddle:ori2} and \eqref{SCA_m}.
We equivalently convert problems \eqref{saddle:ori2} and \eqref{SCA_m} to the smooth convex-concave saddle point problems by using the primal-dual transformation,
and then propose to use the Mirror-Prox method \cite{variational} to solve the resulting problems.
%The convergence rate $\mathcal{O}\left(\frac{1}{\epsilon} \right)$ of Mirror-Prox algorithm is established in \cite{variational}.
%Besides, due to the special form of our problem, the low complexity of each iteration can be guaranteed.
%The details are presented as follows.

\subsection{Mirror-Prox Method for Non-smooth Convex Problem \eqref{saddle:ori2}}
%The Mirror-Prox method was firstly proposed in \cite{variational} for solving the Lipschitz continuous variational inequality problem.
%In this subsection, we show that problem \eqref{saddle:ori2} can be cast as a saddle point problem that corresponds to a variational inequality problem, 
%and then apply the Mirror-Prox method to solve the problem.
 
\subsubsection{Smooth Saddle Point Problem Formulation}

The objective function of problem \eqref{saddle:ori2} is pointwise maximum of affine functions.
We equivalently convert non-smooth problem \eqref{saddle:ori2} to a smooth convex-concave saddle point problem in Lemma \ref{P_D}. 

\begin{lemma}
\label{P_D}
\emph{
 (Primal-Dual Transformation) The non-smooth convex problem \eqref{saddle:ori2} is equivalent to the following smooth convex-concave saddle point problem       
 \begin{equation}\label{ori:saddle4}
        \begin{split}
             \underset{\tilde{\bm v}}{\min} &~ \max_{\mathbf{y}} ~ \left(\mathbf{P}^{(n)} \tilde{\bm v} + \mathbf{q}^{(n)}\right)^{\sf T} \bm y\\
                \text { s.t. } &~  \tilde{\bm v} \in \mathcal{V}, ~ \bm y \in \mathcal{Y},
        \end{split}
\end{equation}
where $\mathbf{P}^{(n)} = \left[\mathbf{p}_{1}^{(n)},\mathbf{p}_{2}^{(n)},\ldots,\mathbf{p}_{K}^{(n)}\right]^{\sf T}$, $\mathbf{q}^{(n)} = \left[q_{1}^{(n)},q_{2}^{(n)},\ldots,q_{K}^{(n)}\right]^{\sf T}$,
%\[
%\begin{aligned}
%       \begin{split}
%               \mathbf{P}^{(n)} &= \left[\mathbf{p}_{1}^{(n)},\mathbf{p}_{2}^{(n)},\ldots,\mathbf{p}_{K}^{(n)}\right]^{\sf T},\\
%\mathbf{q}^{(n)} &= \left[q_{1}^{(n)},q_{2}^{(n)},\ldots,q_{K}^{(n)}\right]^{\sf T},\\
%       \end{split}
%\end{aligned}
%\]
and $\bm y$ is the Lagrangian dual variable with set 
$\mathcal{Y} = \{\bm y|\bm y \geq \mathbf{0},\mathbf{1}^{\sf T} \bm y=1,\bm y\in\mathcal{R}^{K}\}$ being the feasible domain. 
%Note that $\mathcal{Y}$ is a $K$-dimensional probability simplex.
}
\end{lemma}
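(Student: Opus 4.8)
The plan is to exploit the standard variational representation of the pointwise maximum of finitely many numbers as a linear maximization over the probability simplex. Concretely, for any vector $\bm u \in \mathbb{R}^K$ I would invoke the identity
\[
\max_{k} u_k = \max_{\bm y \in \mathcal{Y}} \bm y^{\sf T} \bm u,
\]
where $\mathcal{Y} = \{\bm y \mid \bm y \geq \mathbf{0},~ \mathbf{1}^{\sf T}\bm y = 1\}$ is the unit simplex. The inequality $\max_{\bm y \in \mathcal{Y}} \bm y^{\sf T} \bm u \leq \max_k u_k$ holds because every feasible $\bm y$ yields a convex combination, so $\bm y^{\sf T}\bm u = \sum_k y_k u_k \leq (\sum_k y_k)\max_k u_k = \max_k u_k$. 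The reverse inequality is attained by taking $\bm y = \bm{e}_{k^\star}$, the standard basis vector indexed by any $k^\star \in \argmax_k u_k$, which is feasible and gives $\bm y^{\sf T}\bm u = u_{k^\star} = \max_k u_k$; this reflects the fact that a linear functional over a polytope attains its maximum at a vertex of $\mathcal{Y}$.

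Next I would apply this identity pointwise in $\tilde{\bm v}$. Setting $u_k = (\mathbf{p}_k^{(n)})^{\sf T}\tilde{\bm v} + q_k^{(n)}$, the stacked vector is exactly $\bm u = \mathbf{P}^{(n)}\tilde{\bm v} + \mathbf{q}^{(n)}$ by the definitions of $\mathbf{P}^{(n)}$ and $\mathbf{q}^{(n)}$, so the identity gives
\[
\max_k \left[(\mathbf{p}_k^{(n)})^{\sf T}\tilde{\bm v} + q_k^{(n)}\right] = \max_{\bm y \in \mathcal{Y}} \left(\mathbf{P}^{(n)}\tilde{\bm v} + \mathbf{q}^{(n)}\right)^{\sf T}\bm y
\]
for every fixed $\tilde{\bm v}$. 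Taking the minimum over $\tilde{\bm v} \in \mathcal{V}$ on both sides converts the objective of \eqref{saddle:ori2} into the inner-maximization objective of \eqref{ori:saddle4}, so the two problems share the same optimal value and the same set of minimizers in $\tilde{\bm v}$, establishing the claimed equivalence.

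To close the argument I would verify the structural claims of the reformulation. The objective $(\mathbf{P}^{(n)}\tilde{\bm v} + \mathbf{q}^{(n)})^{\sf T}\bm y$ is bilinear in the pair $(\tilde{\bm v}, \bm y)$: it is linear (hence convex) in $\tilde{\bm v}$ for each fixed $\bm y$, and linear (hence concave) in $\bm y$ for each fixed $\tilde{\bm v}$, so it is a smooth convex-concave function as asserted. Both $\mathcal{V}$ (a product of disks) and $\mathcal{Y}$ (the simplex) are convex and compact, so the saddle point problem is well-posed and the order of $\min$ and $\max$ is immaterial by Sion's minimax theorem should it be needed later.

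I expect no substantive obstacle: the lemma is essentially the linear-programming reformulation of a maximum of affine functions, and the only care required is the bookkeeping that confirms stacking the scalars $u_k$ produces precisely $\mathbf{P}^{(n)}\tilde{\bm v} + \mathbf{q}^{(n)}$ and that $\mathcal{Y}$ is the correct dual domain (the simplex rather than, say, the nonnegative orthant).
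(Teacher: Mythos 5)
Your proposal is correct, but it takes a genuinely different route from the paper. The paper proves the lemma through convex duality: it starts from the epigraph reformulation \eqref{socp}, forms the Lagrangian with multiplier $\bm y \geq \bm 0$, observes that boundedness of the Lagrangian in the epigraph variable forces $\mathbf{1}^{\sf T}\bm y = 1$ (which is how the simplex $\mathcal{Y}$ emerges), invokes strong duality using convexity and compactness of $\mathcal{V}$, and finally uses bilinearity of the objective to identify the resulting $\max_{\bm y}\min_{\tilde{\bm v}}$ dual with the $\min_{\tilde{\bm v}}\max_{\bm y}$ saddle problem \eqref{ori:saddle4}. You instead use the elementary variational identity $\max_k u_k = \max_{\bm y \in \mathcal{Y}} \bm y^{\sf T}\bm u$, proven directly by the convex-combination bound and vertex attainment, applied pointwise in $\tilde{\bm v}$. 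Your route is more self-contained: it needs no duality theory, no strong-duality certificate, and — notably — no minimax swap at all, since the pointwise equality of objectives immediately gives equivalence in the strongest sense (identical optimal values and identical minimizer sets in $\tilde{\bm v}$). What the paper's route buys in exchange is the interpretation stated in the lemma itself — that $\bm y$ is a \emph{Lagrangian dual variable} and that $\mathcal{Y}$ arises as the dual-feasible set — and it situates the transformation in a framework that extends to settings where a closed-form representation of the inner maximum is not available. Both arguments are sound; yours is the shorter path to the stated equivalence.
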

\begin{proof}
Please refer to Appendix C.     
\end{proof}

Different from the Nesterov's smoothing technique discussed in Section \ref{discussion}, which relies on the smoothness parameter to balance the tradeoff between approximation accuracy and computation efficiency,
our proposed method is parameter-free and the resulting smooth saddle point problem \eqref{ori:saddle4} is equivalent to the non-smooth problem \eqref{saddle:ori2} rather than an approximation.
By denoting $\psi^{(n)}(\tilde{\bm v}, \bm y):=\left(\mathbf{P}^{(n)} \tilde{\bm v} + \mathbf{q}^{(n)}\right)^{\sf T} \bm y,$
solving non-smooth convex problem \eqref{ori:saddle4} is equivalent to finding a saddle point for the smooth convex-concave function $\psi^{(n)}(\tilde{\bm v}, \bm y)$ under $\mathcal{V}\times\mathcal{Y}$.
%Denote function $\psi^{(n)}(\tilde{\bm v}, \bm y):=\left(\mathbf{P}^{(n)} \tilde{\bm v} + \mathbf{q}^{(n)}\right)^{\sf T} \bm y.$
%\textcolor{blue}{
%It is clear that $\psi^{(n)}(\tilde{\bm v}, \bm y)$ is differentiable. 
%We can further obtain the following lemma.}
%
%\emph{Lemma 2:}
%       $\psi^{(n)}(\tilde{\bm v}, \bm y)$ is $\{0,\max_{k}\{\|\mathbf{p}_{k}^{(n)}\|\},0,\max_{k}\{\|\mathbf{p}_{k}^{(n)}\|\} \}$ - smooth with respect to $\|\cdot\|$ on $\mathcal{V}$ and $\|\cdot\|_1$ on $\mathcal{Y}$ ,
%
%\begin{proof}
%Please refer to appendix C for the proof       
%\end{proof}

%\textcolor{blue}{Delete saddle point to variational inequality}

%$$\underset{\bm x}{\min} \max_{\bm y} ~~ \psi^{(n)}(\bm x, \bm y) = \underset{\bm y}{\max} \min_{\tilde{\bm v}} ~~ \psi^{(n)}(\bm x, \bm y).$$

\subsubsection{First-Order Optimality Condition}
By denoting $\left(\tilde{\bm v}^{*}, \mathbf{y}^{*}\right)$ as the saddle point of $\psi^{(n)}(\tilde{\bm v}, \mathbf{y})$, we have
\begin{equation}
\label{saddle_ineq}
\psi^{(n)} \left(\tilde{\bm v}^{*}, \bm{y}\right) \leq \psi^{(n)}\left(\tilde{\bm v}^{*}, \bm{y}^{*}\right) \leq \psi^{(n)} \left(\tilde{\bm v}, \bm{y}^{*}\right), \forall (\tilde{\bm v},\bm y )\in \mathcal{V} \times \mathcal{Y}. 
\end{equation}
%As $\psi^{(n)}(\tilde{\bm v}, \bm y)$ is differentiable, 
The first-order optimality condition \cite{boyd_vandenberghe_2004} of a saddle point for $\psi^{(n)}(\tilde{\bm v}, \bm y)$ is given by
%\begin{equation}
%\label{zero_order}
%        \begin{aligned}
%             \psi^{(n)}\left(\tilde{\bm v}, \bm{y}^{*}\right) - \psi^{(n)}\left({\tilde{\bm v}}^{*}, \bm{y}^{*}\right) \geq & 0,\\
%                \psi^{(n)}\left(\tilde{\bm v}^{*}, \bm{y}^{*}\right) - \psi^{(n)}\left(\tilde{\bm v}^{*}, \bm{y}\right) \geq & 0.
%        \end{aligned}
%\end{equation}
%Furthermore, \eqref{zero_order} can be equivalently expressed as  
\begin{equation}
\begin{aligned}
      \left  \{ 
      \begin{split}
             \nabla_{\tilde{\bm v}} \psi^{(n)}\left(\tilde{\bm v}^{*}, \bm{y}^{*}\right)(\tilde{\bm v}-\tilde{\bm v}^{*}) \geq & 0,\\
                -\nabla_{\bm{y}} \psi^{(n)}\left(\tilde{\bm v}^{*}, \bm{y}^{*}\right)(\bm{y}-\bm{y}^{*}) \geq & 0,
      \end{split}\right. \quad
      \begin{split}
      \forall ~ (\tilde{\bm v},\bm y )\in \mathcal{V} \times \mathcal{Y}.
      \end{split} 
      \label{two_inequality}
\end{aligned}
\end{equation}
By denoting $\bm{z}=\left[\tilde{\bm v}^{\sf T}, \bm{y}^{\sf T}\right]^{\sf T}$ and
\begin{equation}
\label{compute_F}
\mathbf{F}(\bm{z}):=
\begin{bmatrix}
& \nabla_{\tilde{\bm v}} \psi^{(n)}({\tilde{\bm v}}, \bm{y}) \\
&-\nabla_{\bm{y}}  \psi^{(n)}(\tilde{\bm v}, \bm{y})
\end{bmatrix} = 
\begin{bmatrix}
&(\mathbf{P}^{(n)})^{\sf T}\bm y \\ 
&-\left(\mathbf{P}^{(n)} \tilde{\bm v} + \mathbf{q}^{(n)} \right)     
\end{bmatrix}, \nonumber
\end{equation}
the first-order optimality condition \eqref{two_inequality} can be rewritten in a more compact form as follows
\begin{equation}
\mathbf{F}\left(\bm{z}^{*}\right)^{\sf T}\left(\bm{z}-\bm{z}^{*}\right) \geq 0, \forall ~  \bm{z} \in \mathcal{V}\times\mathcal{Y}.
\end{equation}   
\begin{lemma}
\label{Lipschitz}
\emph{
The operator $\mathbf{F}(\bm{z})$ is monotone and $L$-Lipschitz continuous on space $\mathcal{V}\times\mathcal{Y}$, where $\mathcal{V}$ is endowed with $l_2$ norm, $\mathcal{Y}$ is endowed with $l_1$ norm, and the Lipschitz parameter $L = \max_{k}\{\|\mathbf{p}_{k}^{(n)}\|\}$.
%We denote the Lipschitz parameter as $L = \max_{k}\{\|\mathbf{p}_{k}^{(n)}\|$.
}
\end{lemma}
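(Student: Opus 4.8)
The plan is to exploit the affine, bilinear structure of $\mathbf{F}$. Writing $\bm z = [\tilde{\bm v}^{\sf T}, \bm y^{\sf T}]^{\sf T}$, the operator reads $\mathbf{F}(\bm z) = \mathbf{M}\bm z + \mathbf{c}$ with
\[
\mathbf{M} = \begin{bmatrix} \mathbf{0} & (\mathbf{P}^{(n)})^{\sf T} \\ -\mathbf{P}^{(n)} & \mathbf{0} \end{bmatrix}, \qquad \mathbf{c} = \begin{bmatrix} \mathbf{0} \\ -\mathbf{q}^{(n)} \end{bmatrix}.
\]
The first thing I would observe is that $\mathbf{M}$ is skew-symmetric, i.e.\ $\mathbf{M}^{\sf T} = -\mathbf{M}$, which is the generic signature of the gradient field of a bilinear saddle function. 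Monotonicity then follows in essentially one line: for any $\bm z_1, \bm z_2$, setting $\Delta \bm z = \bm z_1 - \bm z_2$ gives $(\mathbf{F}(\bm z_1) - \mathbf{F}(\bm z_2))^{\sf T}\Delta \bm z = \Delta\bm z^{\sf T}\mathbf{M}\,\Delta \bm z = 0$, since a skew-symmetric quadratic form vanishes identically. Concretely I would expand this into the two coupling terms $(\mathbf{P}^{(n)})^{\sf T}\Delta\bm y$ and $-\mathbf{P}^{(n)}\Delta\tilde{\bm v}$, verify that their inner-product contributions are mutual transposes and hence cancel, so that $\mathbf{F}$ is monotone (in fact with equality).

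For the Lipschitz bound I would track the two block norms separately, respecting that $\mathcal{V}$ carries the $l_2$ norm and $\mathcal{Y}$ carries the $l_1$ norm, so that the relevant dual norms on the two blocks are $l_2$ and $l_\infty$, respectively. The $\tilde{\bm v}$-block of $\mathbf{F}(\bm z_1) - \mathbf{F}(\bm z_2)$ equals $(\mathbf{P}^{(n)})^{\sf T}\Delta\bm y = \sum_k \mathbf{p}_k^{(n)}(\Delta\bm y)_k$; applying the triangle inequality and then H\"older's $l_1$--$l_\infty$ duality yields $\|(\mathbf{P}^{(n)})^{\sf T}\Delta\bm y\|_2 \le (\max_k\|\mathbf{p}_k^{(n)}\|)\,\|\Delta\bm y\|_1$. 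The $\bm y$-block equals $-\mathbf{P}^{(n)}\Delta\tilde{\bm v}$, whose $k$-th entry is $(\mathbf{p}_k^{(n)})^{\sf T}\Delta\tilde{\bm v}$; Cauchy--Schwarz gives $\|\mathbf{P}^{(n)}\Delta\tilde{\bm v}\|_\infty = \max_k |(\mathbf{p}_k^{(n)})^{\sf T}\Delta\tilde{\bm v}| \le (\max_k\|\mathbf{p}_k^{(n)}\|)\,\|\Delta\tilde{\bm v}\|_2$. Both blocks therefore contract by the same factor $L = \max_k\{\|\mathbf{p}_k^{(n)}\|\}$.

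Finally I would combine the two block estimates under the product norm on $\mathcal{V}\times\mathcal{Y}$. Taking the squared product norm $\|\bm z\|^2 = \|\tilde{\bm v}\|_2^2 + \|\bm y\|_1^2$, whose dual evaluates blockwise as $\|\cdot\|_2^2 + \|\cdot\|_\infty^2$, I would add the two squared bounds to obtain $\|\mathbf{F}(\bm z_1) - \mathbf{F}(\bm z_2)\|_\ast^2 \le L^2(\|\Delta\tilde{\bm v}\|_2^2 + \|\Delta\bm y\|_1^2) = L^2\|\bm z_1 - \bm z_2\|^2$, which is the claimed $L$-Lipschitz estimate. The one point that needs care --- and the main obstacle --- is the bookkeeping of the mixed norms: one must pair each block of $\mathbf{F}$ with the correct dual norm so that the $l_1$--$l_\infty$ and $l_2$--$l_2$ dualities both produce the identical constant $\max_k\|\mathbf{p}_k^{(n)}\|$, after which everything reduces to routine linear algebra.
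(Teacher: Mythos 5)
Your proof is correct, and its two nontrivial estimates coincide exactly with the paper's: the paper formalizes Lipschitz continuity via four block-wise partial-gradient inequalities, two of which are trivially zero (each partial gradient of $\psi^{(n)}$ is constant in its own variable), and the remaining two are precisely your bounds $\|(\mathbf{P}^{(n)})^{\sf T}(\bm y - \bm y')\| \le L\,\|\bm y - \bm y'\|_1$ (triangle inequality plus H\"older) and $\|\mathbf{P}^{(n)}(\tilde{\bm v}-\tilde{\bm v}')\|_\infty \le L\,\|\tilde{\bm v}-\tilde{\bm v}'\|$ (Cauchy--Schwarz), with the same constant $L=\max_k\{\|\mathbf{p}_k^{(n)}\|\}$. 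Where you differ is in the packaging. The paper leaves the result in block-wise form, which is the format that feeds directly into the Mirror-Prox stepsize choice $\gamma = 1/(2L)$; you instead assemble the two bounds into a single inequality with respect to the product norm $\|\bm z\|^2 = \|\tilde{\bm v}\|_2^2 + \|\bm y\|_1^2$, invoking the (correct) fact that its dual norm is the $l_2$-combination of the block dual norms $\|\cdot\|_2$ and $\|\cdot\|_\infty$, so that \emph{$L$-Lipschitz on $\mathcal{V}\times\mathcal{Y}$} becomes one literal estimate; the cross-block structure of $\mathbf{F}$ makes the two squared bounds add up to exactly $L^2\|\Delta\bm z\|^2$, so this step is sound. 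You also genuinely improve on the paper for the monotonicity half: the paper merely asserts that $\mathbf{F}$ is a linear operator with well-defined monotonicity, whereas your observation that $\mathbf{F}(\bm z)=\mathbf{M}\bm z+\mathbf{c}$ with $\mathbf{M}$ skew-symmetric, hence $(\mathbf{F}(\bm z_1)-\mathbf{F}(\bm z_2))^{\sf T}(\bm z_1-\bm z_2)=\Delta\bm z^{\sf T}\mathbf{M}\,\Delta\bm z=0\ge 0$, is an actual one-line proof (and shows the saddle operator is merely monotone, never strictly so).
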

\begin{proof}
Please refer to Appendix D.     
\end{proof}

Given the above,
problem \eqref{ori:saddle4} is equivalent to the following variational inequality problem with monotone and $L$-Lipschitz continuous operator:
\begin{equation}\label{VI}
\begin{aligned}
\text{find} \quad &\bm{z}^{*} \\
\text{s.t.}\quad &\mathbf{F}\left(\bm{z}^{*}\right)^{\sf T}\left(\bm{z}-\bm{z}^{*}\right) \geq 0, \forall ~  \bm{z} \in \mathcal{V}\times\mathcal{Y}, \\
 &\bm{z}^{*} \in \mathcal{V}\times\mathcal{Y}.
\end{aligned}
\end{equation}

\begin{remark}
\emph{
$\mathbf{F}(\bm{z})$ can be regarded as a gradient-type vector field on space $\mathcal{V} \times \mathcal{Y}$.
The variational inequality $\mathbf{F}\left(\bm{z}^{*}\right)^{\sf T}\left(\bm{z}-\bm{z}^{*}\right) \geq 0$ is similar to the first-order optimality condition for convex constrained problems, where $\mathbf{F}$ resembles the gradient or subgradient.
Hence, an intuitive solution is to employ the generalized projected gradient method \cite{ryu2016primer} to solve problem \eqref{VI}.
 However, the classical gradient-type algorithm cannot monitor the local geometry in the non-Euclidean space \cite{mirror1983}, thereby, weakening the algorithm performance. 
 This motivates the Mirror-Prox method, which we shall present as follows.   
 }
 \end{remark}

\subsubsection{Mirror-Prox Method}

%The authors in \cite{variational} proposed a mirror-prox algorithm which is similar to mirror descent, but Mirror-prox method owns a better convergence property.
%\mathcal{V}\times\mathcal{Y} 
The Mirror-Prox method was firstly proposed in \cite{variational} for solving the Lipschitz continuous variational inequality problem with convergence rate $\mathcal{O}\left(\frac{1}{t} \right)$.
Specifically, at each iteration, the Mirror-Prox method updates $\bm z$ through the following two steps 
\begin{subequations}
\label{mirror-prox-ori}
\begin{align}
\begin{split}
\label{mirror-prox-ori-a}
\bm z_{t+1}^{\prime}&=\underset{\bm z \in \mathcal{V} \times \mathcal{Y}}{\argmin}\left\{D\left(\bm z, \bm z_{t}\right)+\left\langle\gamma \mathbf{F}\left(\bm z_{t}\right), \bm z\right\rangle\right\},
\end{split}\\
\begin{split}
\label{mirror-prox-ori-b}
\bm z_{t+1}&=\underset{\bm z \in \mathcal{V} \times \mathcal{Y}}{\argmin}\left\{D\left(\bm z, \bm z_{t}\right)+\left\langle\gamma \mathbf{F}\left(\bm z_{t+1}^{\prime}\right), \bm z\right\rangle\right\},
\end{split}
\end{align}
\end{subequations}
where $D\left(\bm z, \bm z_{t}\right)$ denotes the Bregman distance and $\gamma$ is a parameter that is determined by the Lipschitz parameter of $\mathbf{F}(\bm{z})$.
%to attain the ergodic convergence rate $\mathcal{O}\left(\frac{1}{t}\right)$, the step-size $\gamma$ should be smaller than $1/\mu$.    
According to the analysis in \cite{variational,Bubeck2015}, we set $\gamma = 1/\left(2\max_{k}\{\|\mathbf{p}_{k}^{(n)}\|\}\right)$ to achieve the desired convergence rate.
Besides, the use of Bregman distance $D\left(\bm z, \bm z_{t}\right)$ is to monitor the local geometry for improving the algorithm performance \cite{mirror1983}.
Specifically, Bregman distance $D\left(\bm z, \bm z_{t}\right)$ is induced by a mirror mapping function $\Phi(\bm z):~\mathbb{R}^{2N}\times\mathbb{R}^{K} \rightarrow \mathbb{R}$ for set $\mathcal{V}\times\mathcal{Y}$ :
\begin{equation}
\begin{split}
        D\left(\bm z, \bm z_{t}\right) &= \Phi(\bm z) - \Phi(\bm z_t) - \left\langle \nabla \Phi(\bm z_t), \bm z-\bm z_t \right\rangle. \label{bregman divergence}
\end{split}     
\end{equation}
We select the mapping function according to the structure of $\mathcal{V}\times\mathcal{Y}$ to attain the goal of capturing the local geometry.
In our work, $\Phi(\bm z):~\mathbb{R}^{2N}\times\mathbb{R}^{K} \rightarrow \mathbb{R}$ is defined as
\[
\Phi(\bm z) =  \frac{1}{2}\|\tilde{\bm v}\|^{2} + \sum_{k=1}^{K} \bm{y}^{k} \log \bm{y}^{k},
\]
where the first item and the second item are the generally used mirror mapping function for an Euclidean space, like $\mathcal{V}$, and a simplex space, like $\mathcal{Y}$, respectively.  
As a result, we have
\begin{equation}
\begin{split}
        D\left(\bm z, \bm z_{t}\right) = \frac{1}{2}\|\tilde{\bm v}
-\tilde{\bm v}_t\|^{2}+\sum_{k=1}^{K} y^{k} \log \frac{y^{k}}{y_t^{k}}
-\sum_{k=1}^{K}(y^{k}-y_t^{k}).\label{bregman divergence explicit}
\end{split}     
\end{equation}

%\begin{equation}
%\begin{split}
%       D\left(\bm z, \bm z_{t}\right) &= \Phi(\bm z) - \Phi(\bm z_t) - \left\langle \nabla \Phi(\bm z_t), \bm z-\bm z_t \right\rangle \label{bregman divergence} \\
%       &= \frac{1}{2}\|\tilde{\bm v}
%-\tilde{\bm v}_t\|^{2}+\sum_{k=1}^{K} y^{k} \log \frac{y^{k}}{y_t^{k}}
%-\sum_{k=1}^{K}(y^{k}-y_t^{k}).\label{bregman divergence explicit}
%\end{split}    
%\end{equation}

\begin{remark}
\emph{
The update of the Mirror-Prox method has an additional Step \eqref{mirror-prox-ori-b} compared to the mirror descent method that only requires Step \eqref{mirror-prox-ori-a}. 
The computation of $\bm z_{t+1}^{\prime}$ in \eqref{mirror-prox-ori-a} is used to find a better direction $\mathbf{F}\left(\bm z_{t+1}^{\prime}\right)$ than the mirror descent method
which is further used to update $\bm z_{t+1}$, i.e., \eqref{mirror-prox-ori-b}.  
This process is illustrated in Fig. \ref{mirror}.
%\vspace{-3mm}
\begin{figure}[t]
        \centering
                \includegraphics[width=5cm,height=3.5cm]{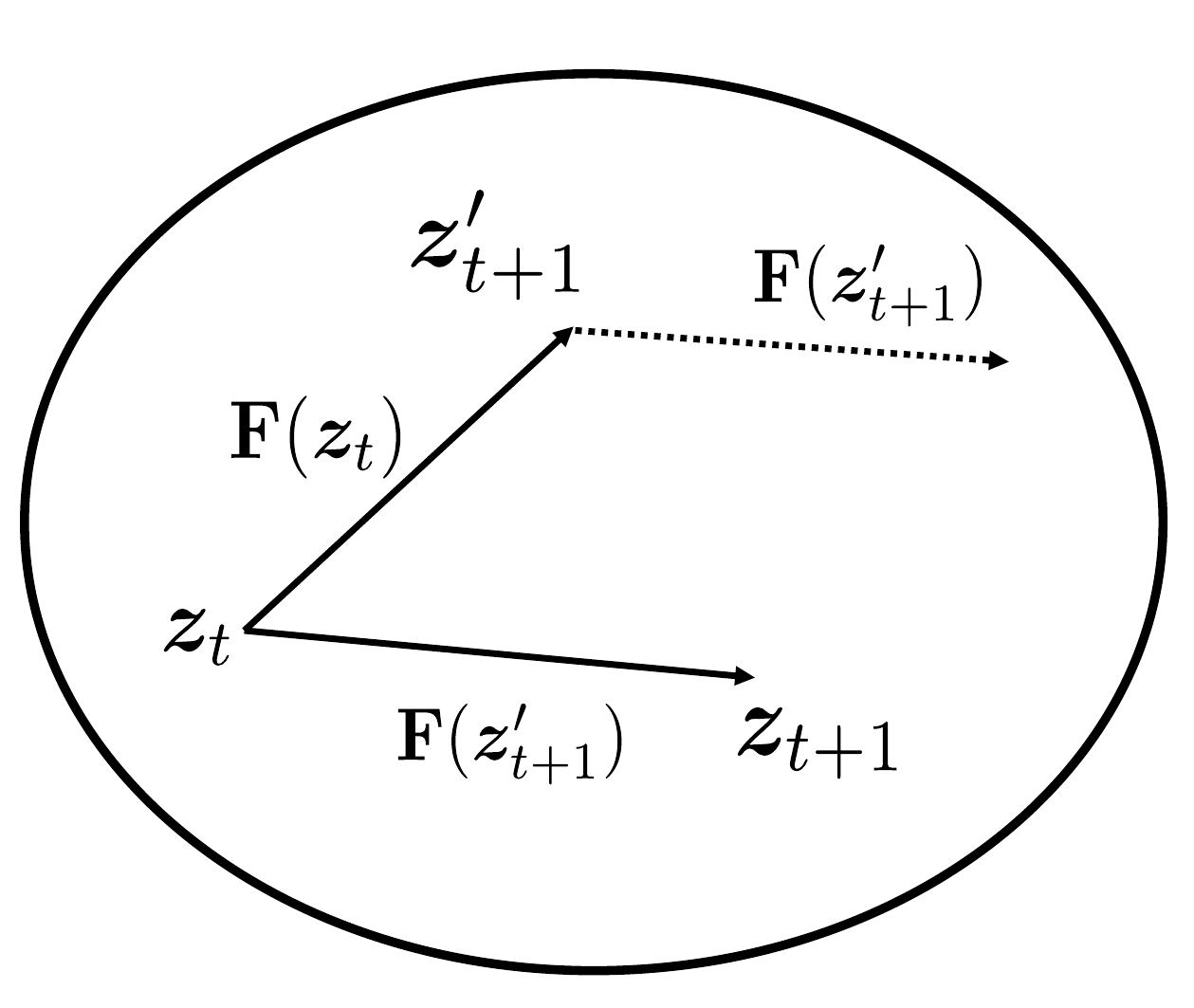}
                \vspace{-3mm}
                \caption{Illustration of the two steps of the Mirror-Prox method in the $t$-th iteration.}\label{mirror}
                \vspace{-6mm}
\end{figure}
The Mirror-Prox method improves the convergence rate from $\mathcal{O}\left(\frac{1}{\sqrt{t}} \right)$ achieved by mirror descent to $\mathcal{O}\left(\frac{1}{t} \right)$ by using a more accurate update. 
We will show that the updates of \eqref{mirror-prox-ori-a} and \eqref{mirror-prox-ori-b} are of low complexity. 
}
\end{remark}

\subsubsection{Implementation Details}

At each iteration of the Mirror-Prox method, we need to efficiently solve two optimization problems \eqref{mirror-prox-ori-a} and \eqref{mirror-prox-ori-b}. 
Specifically, by introducing a variable $\bm w$, the update of $\bm z_{t+1}^{\prime}$ in \eqref{mirror-prox-ori-a} can be decomposed into the following three steps \cite{Bubeck2015}
%\vspace{-8mm}
\begin{subequations}
\label{upd}
\begin{align}
        \nabla \Phi(\bm w) &= \nabla \Phi(\bm z_t) - \gamma \mathbf{F}\left(\bm z_{t}\right),\label{aaa}  \\
        \bm w ~ &=  \nabla \Phi^{-1}(\nabla \Phi(\bm z_t) - \gamma \mathbf{F}\left(\bm z_{t}\right)), \label{aa} \\
        \bm z_{t+1}^{\prime} ~ &= \underset{\bm z \in \mathcal{V} \times \mathcal{Y}}{\operatorname{arg} \min}\left\{D(\bm z, \bm w)\right\}. \label{a}
\end{align}     
\end{subequations}
It is clear that \eqref{aaa} is easy to compute.  
Furthermore, $\bm w$ in \eqref{aa} can be obtained in a closed form, since $\nabla \Phi(\cdot)$ and $\nabla \Phi^{-1} (\cdot)$ can analytically be expressed as    
\begin{equation}
\nabla \Phi({\mathbf{z}})=
\begin{bmatrix}
\tilde{\bm v}\\
1+\log {y}_{1} \\
\vdots \\
1+\log {y}_{K}
\end{bmatrix},
\nabla \Phi^{-1}(\bm \xi)=
\begin{bmatrix}
\bm \zeta \\
\text{exp}(\nu_1 - 1)\\
\text{exp}(\nu_2 - 1)\\
\vdots \\
\text{exp}(\nu_K - 1)
\end{bmatrix}, \nonumber
\end{equation}
where $\bm \xi = [\bm \zeta^{\sf T}, \bm \nu ^{\sf T}]^{\sf T}$.
Subsequently, we show that the solution of problem \eqref{a} also admits a closed form, which involves a projection problem on $\mathcal{V} \times \mathcal{Y}$.
According to \eqref{bregman divergence explicit}, we rewrite problem \eqref{a} as
\begin{align}
\label{two_projection}
\begin{autobreak}
\underset{\bm z \in \mathcal{V} \times \mathcal{Y}}{\argmin} \{D(\bm z, \bm w)\}=
\underset{\bm z \in \mathcal{V} \times \mathcal{Y}}{\argmin} \Big \{\frac{1}{2}\|\tilde{\bm v}
-\bm{u}\|^{2}+\sum_{k=1}^{K} y^{k} \log \frac{y^{k}}{e^{k}}
-\sum_{k=1}^{K}(y^{k}-e^{k})\Big\},
\end{autobreak}
\end{align}
where $\bm w = [\bm{u}^{\sf T}, \bm{e}^{\sf T}]^{\sf T}$.
Problem \eqref{two_projection} can be decomposed into two independent subproblems with respect to $\tilde{\bm v}$ and $\bm y$, respectively, given by
\begin{subequations}
\begin{align}
        \tilde{\bm v}_{t+1}^{\prime} =\underset{\tilde{\bm v} \in \mathcal{V}}{\argmin}&\left\{\frac{1}{2}\left\|\tilde{\bm v}-\bm{u}\right\|^{2}\right\}, \label{x} \\
        \bm y_{t+1}^{\prime} =\underset{\bm y \in \mathcal{Y}}{\argmin}&\left\{\sum_{k=1}^{K}{y}^{k} \log \frac{{y}^{k}}{e^{k}}-\sum_{k=1}^{K}\left({y}^{k}-e^{k}\right)\right\}.\label{y}
\end{align}     
\end{subequations}
Problem \eqref{x} can be considered as a projection problem in an Euclidean space \cite{boyd_vandenberghe_2004}.
%, i.e., $\tilde{\bm v} = \operatorname{Proj}_{\mathcal{V}}\left(\mathbf{u}\right)$
The optimal $(\tilde{v}_{t+1})_i,~\forall ~  i $, admit the following expression
\begin{equation}
\label{compute_v}
\begin{split}
        (\tilde{v}_{t+1})_i = \left\{\begin{array}{cl}
\frac{u_i}{\left(u_i^{2} + u_{N+i}^{2}\right)^{\frac{1}{2}}}, & u_i^{2} + u_{N+i}^{2} \ge 1, \\
u_i, & \text { otherwise. }
\end{array}\right.
\end{split}     
\end{equation}
Problem \eqref{y} is a projection problem in a simplex space \cite{Bubeck2015}.
%, i.e., , $\tilde{\bm y} = \operatorname{Proj}_{\mathcal{Y}}\left(\mathbf{1}\right)$
The optimal $\bm y$ is given by
\begin{equation}
\label{compute_y}
\begin{split}
        \bm{y}=\left\{\begin{array}{cl}
\bm{e}, & \bm{e} \in \mathcal{Y}, \\
\frac{\bm{e}}{\left\|\bm{e}\right\|_{1}}, & \text { otherwise. }
\end{array}\right.
\end{split}     
\end{equation}
%For ease of notations, we denote the projection of \eqref{compute_v} and \eqref{compute_y} as $\operatorname{Proj}_{\mathcal{V} \times \mathcal{Y}}\left(\bm w\right)$, i.e., $\operatorname{Proj}_{\mathcal{V} \times \mathcal{Y}}\left(\bm w\right) = \underset{\bm z \in \mathcal{V} \times \mathcal{Y}}{\argmin} \{D(\bm z, \bm w)\}$. 
The above derivation for problem \eqref{mirror-prox-ori-a} can be directly applied to problem \eqref{mirror-prox-ori-b}.
%Similarly, denote $\nabla G(\bm w) = \nabla G(\bm z_t) - \gamma \mathbf{F}\left(\bm z_{t+1}^{\prime}\right),$
%problem \eqref{mirror-prox-ori-b} can be decomposed into 
%\begin{subequations}
%\begin{align}
%       &\nabla G(\bm w) =\nabla G(\bm z_t) - \gamma \mathbf{F}\left(\bm z_{t+1}^{\prime}\right) \label{bbb}  \\
%       &\bm w =  \nabla G^{-1}(\nabla G(\bm z_t) - \gamma \mathbf{F}\left(\bm z_{t+1}^{\prime}\right)) \label{bb} \\
%       &\bm z_{t+1} = \underset{\bm z \in \mathcal{V} \times \mathcal{Y}}{\operatorname{arg} \min}\left\{D(\bm z, \bm w)\right\} \label{b}
%\end{align}    
%\end{subequations}
The details of the proposed Mirror-Prox algorithm for solving problem \eqref{VI} is summarized in Algorithm 2.

\begin{algorithm}
\label{alg:general}
\caption{Mirror-Prox method for Problem \eqref{saddle:ori2}}
\SetKwRepeat{Repeat}{repeat}{until}
%\SetKwFor{For}{For}{end}
\SetAlgoLined
\KwIn{Initial point $\bm z_0 =\left[\tilde{\bm v}_0^{\sf T}, \bm{y}_0^{\sf T}\right]^{\sf T}$, threshold $\epsilon$, and stepsize
$\gamma = \frac{1}{2\max_{k}\{\|\mathbf{p}_{k}^{(n)}\|\}}$;}
\For{$t = 1, 2, \ldots, $}{
$\mathbf{F}(\bm{z}_t)=
\begin{bmatrix}
        \left((\mathbf{P}^{(n)})^{\sf T}\bm y_t \right)^{\sf T} ,
        -\left(\mathbf{P}^{(n)} \tilde{\bm v}_t + \mathbf{q}^{(n)}\right)^{\sf T}
\end{bmatrix}^{\sf T}$;\\
$\nabla \Phi(\bm w) = \nabla \Phi(\bm z_t) - \gamma \mathbf{F}\left(\bm z_{t}\right)$; \\
$\bm w =  \nabla \Phi^{-1}(\nabla \Phi(\bm z_t) - \gamma \mathbf{F}\left(\bm z_{t}\right))$; \\
$\bm z_{t+1}^{\prime} = \underset{\bm z \in \mathcal{V} \times \mathcal{Y}}{\argmin} \{D(\bm z, \bm w)\}$; \\
%Project $\bm w$ to set $\mathcal{V} \times \mathcal{Y}$ according to \eqref{compute_v} and \eqref{compute_y};
$\nabla \Phi(\bm w) = \nabla \Phi(\bm z_t) - \gamma \mathbf{F}\left(\bm z_{t+1}^{\prime}\right)$;\\
$\bm w = \nabla \Phi^{-1}(\nabla \Phi(\bm z_t) - \gamma \mathbf{F}\left(\bm z_{t+1}^{\prime}\right))$;\\
%Project $\bm w$ to set $\mathcal{V} \times \mathcal{Y}$ according to \eqref{compute_v} and \eqref{compute_y};
$\bm z_{t+1} = \underset{\bm z \in \mathcal{V} \times \mathcal{Y}}{\argmin} \{D(\bm z, \bm w)\}$;\\
\textbf{If} $D(\bm z_t, \bm z_{t+1})$ less than $\epsilon$, set $\bm z_{*} = \frac{1}{T} \sum_{t=1}^{T} \bm{z}_{t} $, \textbf{break}; \textbf{else go to Step 1};
        } 
%       \KwOut{Take $\tilde{\bm v}^{(n+1)}$ as an approximation solution.}
\end{algorithm}

\subsection{Mirror-Prox Method for Non-smooth Convex Problem \eqref{SCA_m}}

The method proposed for solving problem \eqref{saddle:ori2} can be readily applied to solve problem \eqref{SCA_m}.
We next sketch the process of transforming problem \eqref{SCA_m} to its equivalent variational inequality problem. 
%and then present the details of Mirror-Prox method for solving variational inequality problem.
According to Lemma \ref{P_D}, problem \eqref{SCA_m} is equivalent to the following problem 
 \begin{equation}\label{ori:saddle44}
        \begin{split}
             \underset{\tilde{\bm m}}{\min} &~ \max_{\bm y} ~ \left(\mathbf{\bar P}^{(n)} \tilde{\bm m} + \mathbf{\bar q}^{(n)}\right)^{\sf T} \bm y\\
                \text { s.t. } &~  \tilde{\bm m} \in \mathcal{M},~ \bm y \in \mathcal{Y},
        \end{split}
\end{equation}
where $\mathbf{\bar P}^{(n)} = \left[\mathbf{\bar p}_{1}^{(n)},\ldots,\mathbf{\bar p}_{K}^{(n)}\right]^{\sf T}$, 
$\mathbf{\bar q}^{(n)} = \left[\bar q_{1}^{(n)},\ldots,\bar q_{K}^{(n)}\right]^{\sf T}$.
%\[
%\begin{aligned}
%       \begin{split}
%               \mathbf{\bar P}^{(n)} &= \left[\mathbf{\bar p}_{1}^{(n)},\mathbf{\bar p}_{2}^{(n)},\ldots,\mathbf{\bar p}_{K}^{(n)}\right]^{\sf T},\\
%\mathbf{\bar q}^{(n)} &= \left[\bar q_{1}^{(n)},\bar q_{2}^{(n)},\ldots,\bar q_{K}^{(n)}\right]^{\sf T}.\\
%       \end{split}
%\end{aligned}
%\]
By further denoting $\bm{\bar z}=\left[\tilde{\bm m}^{\sf T}, \bm{y}^{\sf T}\right]^{\sf T}$, $\varphi^{(n)}({\tilde{\bm m}}, \bm{y}) = \left(\mathbf{\bar P}^{(n)} \tilde{\bm m} + \mathbf{\bar q}^{(n)}\right)^{\sf T} \bm y$, and 
\[
\mathbf{\bar F}(\bm{\bar z}):=
\begin{bmatrix}
& \nabla_{\tilde{\bm m}} \varphi^{(n)}({\tilde{\bm m}}, \bm{y}) \\
&-\nabla_{\bm{y}}  \varphi^{(n)}(\tilde{\bm m}, \bm{y})
\end{bmatrix} 
=
\begin{bmatrix}
        (\mathbf{\bar P}^{(n)})^{\sf T}\bm y \\
      - \left( \mathbf{\bar P}^{(n)} \tilde{\bm m} + \mathbf{\bar q}^{(n)} \right)
\end{bmatrix}.
\]
We denote the Lipschitz parameter of operator $\mathbf{\bar F}(\bm{\bar z})$ as $\bar{L}$.
According to Lemma \ref{Lipschitz}, we have $\bar{L} = \max_{k}\{\|\bar{\mathbf{p}}_{k}^{(n)}\|\}$.
Problem \eqref{ori:saddle44} can be further transformed to the following variational inequality problem
\begin{equation}\label{VI_m}
\begin{aligned}
\text{find} \quad &\bm{\bar z}^{*} \\
\text{s.t.}\quad &\mathbf{\bar F}\left(\bm{\bar z}^{*}\right)^{\sf T}\left(\bm{\bar z}-\bm{\bar z}^{*}\right) \geq 0, \forall ~  \bm{\bar z} \in \mathcal{M}\times\mathcal{Y}, \\
 &\bm{\bar z}^{*} \in \mathcal{M}\times\mathcal{Y}.
\end{aligned}
\end{equation}
Algorithm 2 can be applied to solve problem \eqref{VI_m} by replacing $\bm{z}$, $\mathbf{p}_{k}^{(n)}$, and $\mathbf{F}(\bm{z}_t)$ by $\bm{\bar{z}}$, $\mathbf{\bar p}_{k}^{(n)}$, and $\mathbf{\bar F}(\bm{\bar z}_t)$, respectively.

\subsection{Computational Complexity}
At each iteration of Algorithm 2, the computation complexity is dominated by Step 2.
In Step 2, we take a matrix-vector multiplication to update operator $\mathbf{F}(\cdot)$.
For the optimization of phase-shift vector $\tilde{\bm v} \in \mathbb{R}^{2N}$,
the computation cost of operator $\mathbf{F}(\cdot)$ in Step 2 is $\mathcal{O}(NK)$.
Besides, according to \cite{Bubeck2015}, Algorithm 2 can obtain an $\epsilon$-optimal solution within $\mathcal{O}\left(\frac{L \log (N)}{\epsilon}\right)$ iterations.
As a result, the computation complexity of Algorithm 2 is $\mathcal{O}\left(\frac{NKL \log (N)}{\epsilon}\right)$.
In a similar manner, we can conclude that the computation complexity is $\mathcal{O}\left(\frac{MK \bar{L} \log (M)}{\epsilon}\right)$ for updating $\tilde{\bm m}$.
The specific time savings in our problem achieved by the proposed algorithm will be further demonstrated in the following section via simulations.

\section{Simulation Results}
In this section, we present sample simulation results to illustrate the performance of the proposed algorithm for minimizing the MSE of the RIS-assisted AirComp systems. 

\subsection{Simulation Settings}
We consider a  three-dimensional (3D) coordinate system. The AP and the RIS are, respectively, located at $(0,~ 0,~ 20)$ meters and $ (100,~ 0,~ 20) $ meters, while the IoT devices are uniformly located within a circular region centered at $(100,~ 20,~ 0)$ meters with radius $20$ meters. 
The antennas at the AP and the passive reflecting elements at the RIS are arranged as a uniform linear array and a uniform planar array, respectively. 
In the simulations, we consider both large-scale fading and small-scale fading for all the channels.
The distance-dependent large-scale fading is modeled as $T_0(d/d_0)^{-\alpha}$, 
where $T_0$ is the path loss at the reference distance $d_0 = 1$ meter, $d$ denotes the distance between the transmitter and the receiver, and $\alpha$ is the path loss exponent.
We consider Rayleigh fading for the direct channel.
%Hence, the channel coefficient of the direct link from IoT device $k$ to the AP is
%\[
%\bm h_{\mathrm{d},k}=\sqrt{T_0(d_{\mathrm{DA}}/d_0)^{-\alpha_{\mathrm{DA},k}}}\bm h_{\mathrm{d},k}^\mathrm{NLoS}, ~\forall k
%\]
%where $\bm h_{\mathrm{d},k}^\mathrm{NLoS}$ denotes the non-line-of-sight (NLoS) component,
%the distance between IoT devices $k$ and AP is denoted by $d_{\mathrm{DA},k}$ and $\alpha_{\mathrm{DA}}$ is the path loss.
Besides, we consider Rician fading for the reflecting links with Rician factor $\beta$.
For the RIS-AP link (i.e., $\bm G$), we have
\[
\bm G=\sqrt{T_0(d_{\mathrm{RA}}/d_0)^{-\alpha_{\mathrm{RA}}}}\left(\sqrt{\frac{\beta}{1+\beta}}\bm G^\mathrm{LoS}+\sqrt{\frac{1}{1+\beta}}\bm G^\mathrm{NLoS}\right),
\]
%\begin{equation}
%\mathbf{G}=\sqrt{T_0d_{IA}^{-\alpha_{IA}}}\left(\sqrt{\frac{\beta_{IA}}{1+\beta_{IA}}}\mathbf{G}^\mathrm{LoS}+\sqrt{\frac{1}{1+\beta_{IA}}}\mathbf{G}^\mathrm{NLoS}\right),\nonumber
%\end{equation}
where
$\bm{G}^\mathrm{LoS}$ and $\bm G^\mathrm{NLoS}$ denote the line-of-sight (LoS) and the non-line-of-sight (NLoS) components, respectively,
$d_{\mathrm{RA}}$ is the distance between the RIS and the AP, and $\alpha_{\mathrm{RA}}$ is the path loss exponent.
The channel coefficient of the link between IoT device $k$ and the RIS, i.e., $\bm h_{\mathrm{r},k}~ \forall ~ k$, is generated in a similar manner as $\bm G$.
We set the path loss exponents of the device-AP links, the device-RIS links, and the RIS-AP link as $3.8$, $2.5$, and $2.2$, respectively. 
Unless specified otherwise, we set $\beta = 3$, $T_0 = -30 $ dB, $P = 30$ dBm, $\sigma^2 = -90$ dBm, and $\epsilon = 10^{-5}$.

\subsection{Performance Evaluation}
We investigate the convergence performance of the proposed Mirror-Prox based AlterMin SCA algorithm in Fig. \ref{convergence_m_v}.
It can be observed that the MSE of our proposed algorithm monotonically decreases over the iterations and converges in a few iterations. 
In addition, the achieved MSE values of the proposed algorithms before and after the projection of the phase-shift vector at the RIS are almost the same. 
This is because, in the simulations, the obtained phase-shift vector always meets the constraints $|v_i| = 1, \forall ~ i$ before projection. 

\begin{figure}[t]
                \centering
                \includegraphics[width=8cm,height=6cm]{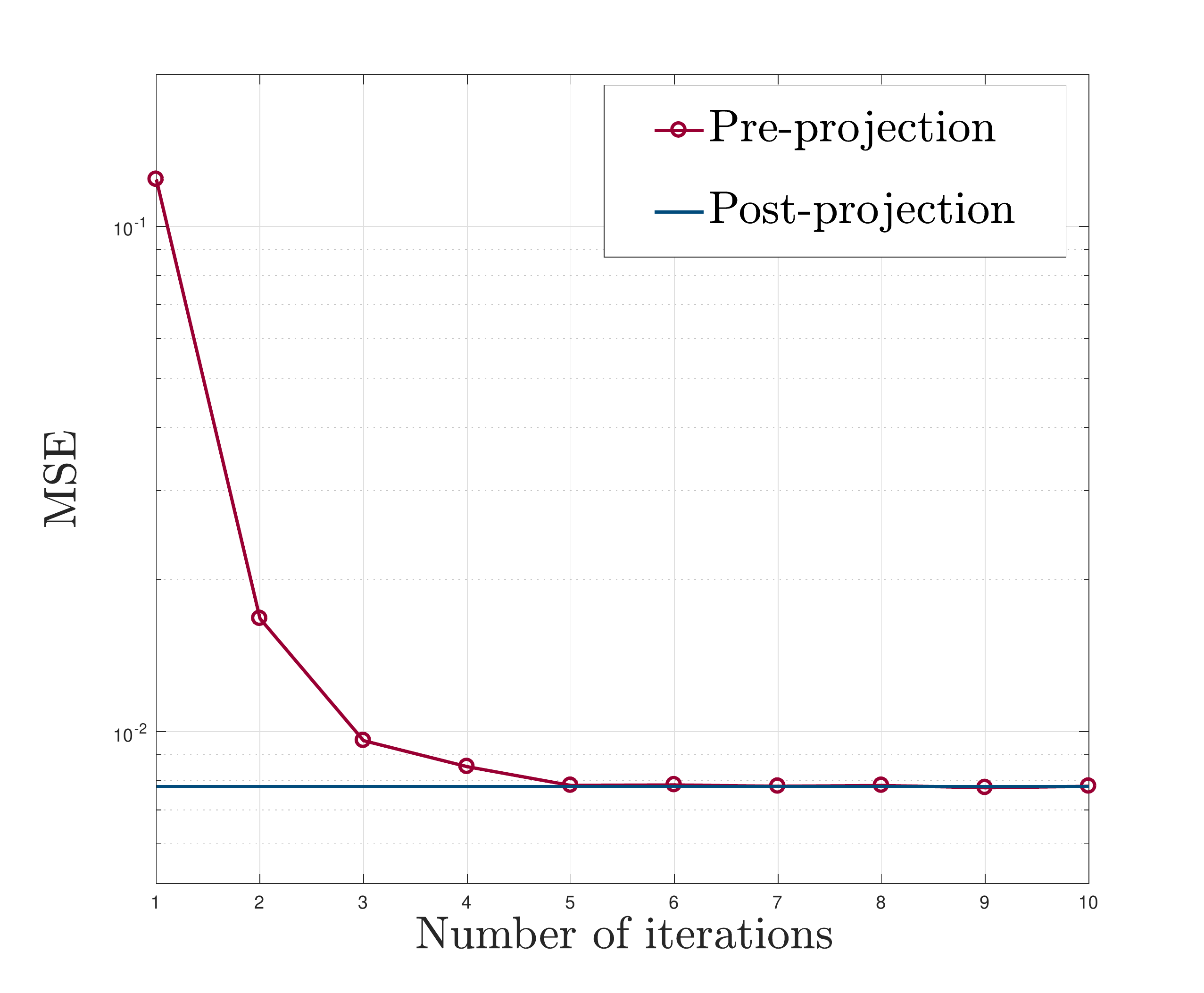}
                \vspace{-3mm}
                \caption{Convergence behavior of the proposed algorithm when $N = 50$, $M = 10$, and $K = 200$.}\label{convergence_m_v}
                \vspace{-5mm}
\end{figure}

We compare the proposed algorithm with the following four baseline methods. 
 \begin{itemize}
 \item  \textbf{Alternating SDR}: This method leverages the SDR technique \cite{Tom_luo_sdr,wu2019intelligent,zhibin} to optimize $\bm m$ and $\bm v$ alternately. The Gaussian randomization technique is applied when the solution returned by the relaxed SDP problem does not meet the rank-one constraint. The number of randomization is set to be $50$.
 \item  \textbf{Alternating DC}: This method was proposed in \cite{jiang2019}, which reformulates the rank-one constrained SDP problem as a DC programming problem, followed by using SCA to obtain the rank-one solution via successively solving the convex approximation of the DC problem. 
 Our comparison with this algorithm is only conducted when the number of IoT devices is small, i.e., low-density scenario, due to its high computational complexity.
 \item  \textbf{Random phase shift}: With this method, the phase-shift matrix $\bm \Theta$ is randomly chosen and kept fixed when optimizing the receive beamforming vector $\bm m$ via our proposed algorithm.
% It worth noting that our proposed algorithm is reduced to Mirror-Prox based SCA algorithm since the alternating procedure can be dropped.    
 \item  \textbf{Without RIS}: In the method, the signals are transmitted only through the direct links, i.e., $\bm \Theta = \bm 0$. We only optimize the receive beamforming vector $\bm m$ via our proposed algorithm. 
\end{itemize}
 
 \begin{figure}[t]
        \centering
        \begin{minipage}{.46\textwidth}
                \centering
                \includegraphics[width=7.5cm,height=6cm]{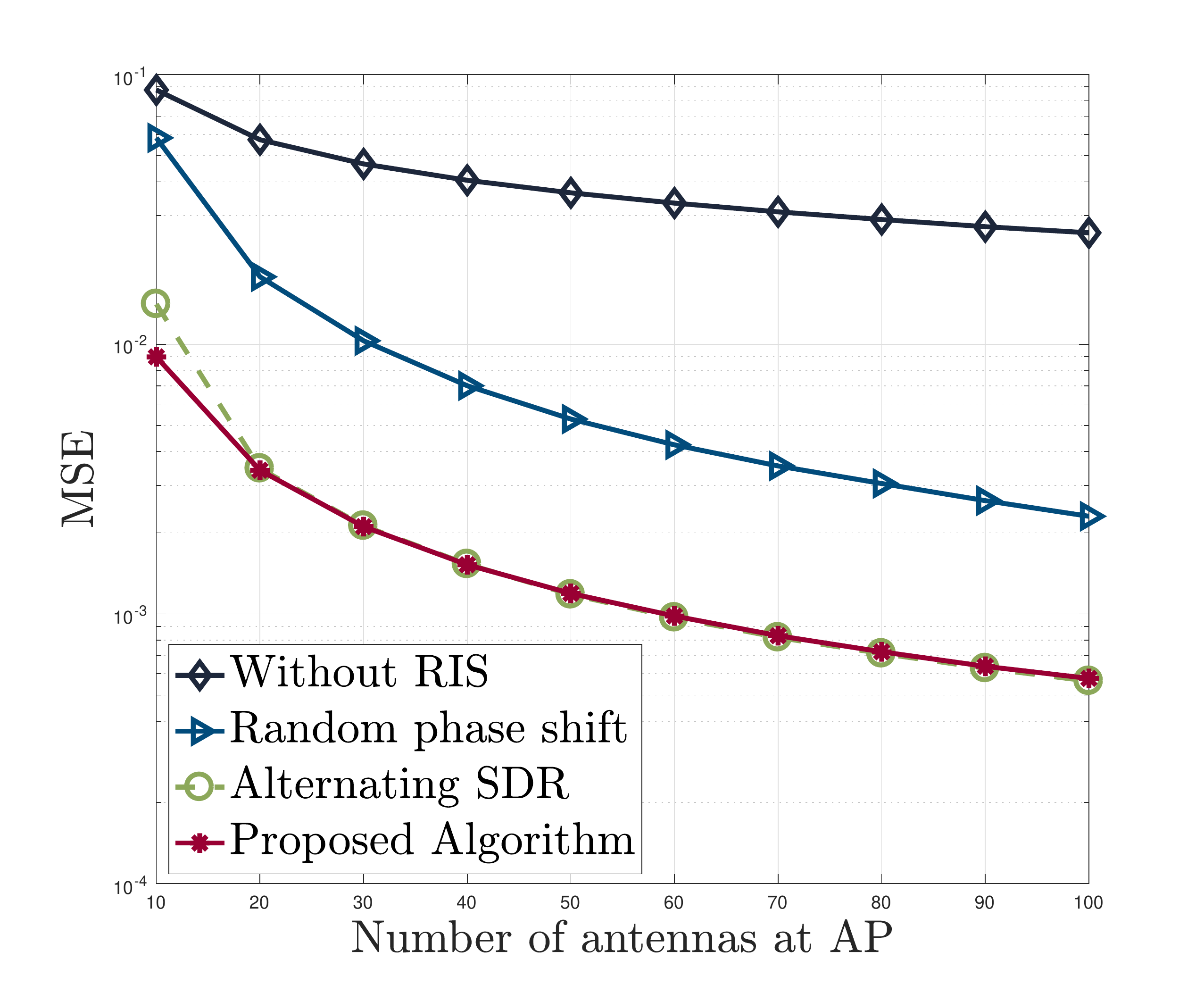}
                \vspace{-9mm}
                \caption{MSE versus the number of antennas at AP when $K=200$ and $N=50$.}\label{l:N}
        \end{minipage}
        \begin{minipage}{.46\textwidth}
                \centering
                \includegraphics[width=7.5cm,height=6cm]{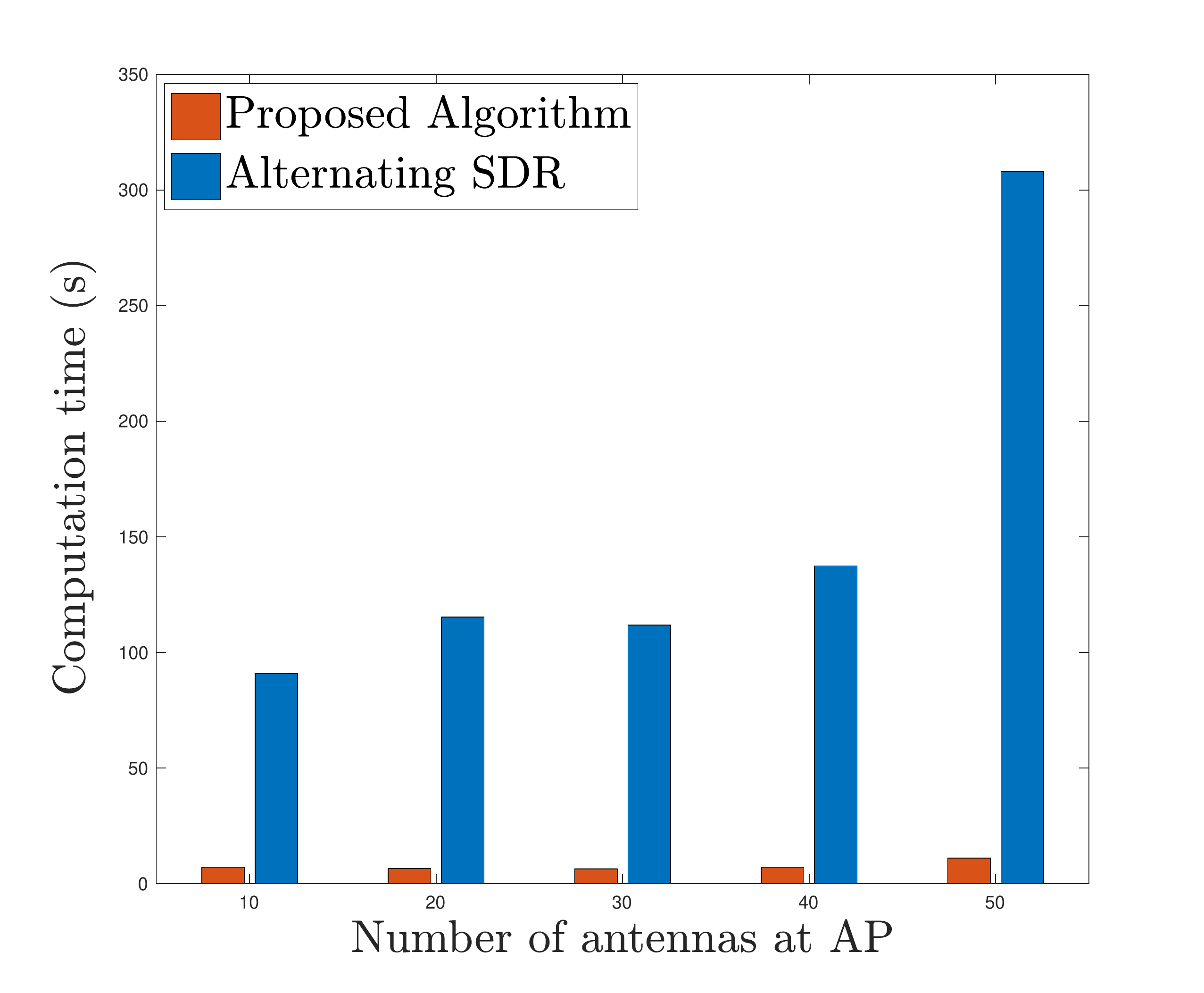}
                \vspace{-9mm}
                \caption{Computation time versus the number of antennas at AP when $K=200$ and $N=50$.}\label{l:t:N}
        \end{minipage}
        \vspace{-5mm}
\end{figure}
%\begin{figure}[t]
%       \centering
%       \subfigure[MSE versus the number of the antennas at AP]{
%               \label{fig_users}
%               \begin{minipage}{0.45\textwidth}
%                       \centering
%                       \includegraphics[scale=0.35]{./figure322/L_AP}
%                       %\setlength{\abovecaptionskip}{2pt}
%               \end{minipage}
%       }
%       \subfigure[Computation time versus the antennas at AP ]{
%               \label{fig_antennas}
%               \begin{minipage}{0.45\textwidth}
%                       \centering
%                       \includegraphics[scale=0.35]{./figure322/L_AP_time}
%                       %\setlength{\abovecaptionskip}{2pt}
%               \end{minipage}
%       } 
%       \vspace{-2mm}
%       \caption{Performance comparison between the proposed algorithm and the baseline methods.}
%       \label{fig_simulation}
%       \vspace{-6mm}
%\end{figure}

%All the figures below are plotted in two scenarios, i.e, small scale and large scale in terms of the number of users.    
%In small-scale Aircomp system which means that the number of the users, i.e, $K$ is not so large.
%

%        \begin{minipage}{.46\textwidth}
%                \centering
%                \includegraphics[width=7.5cm,height=6cm]{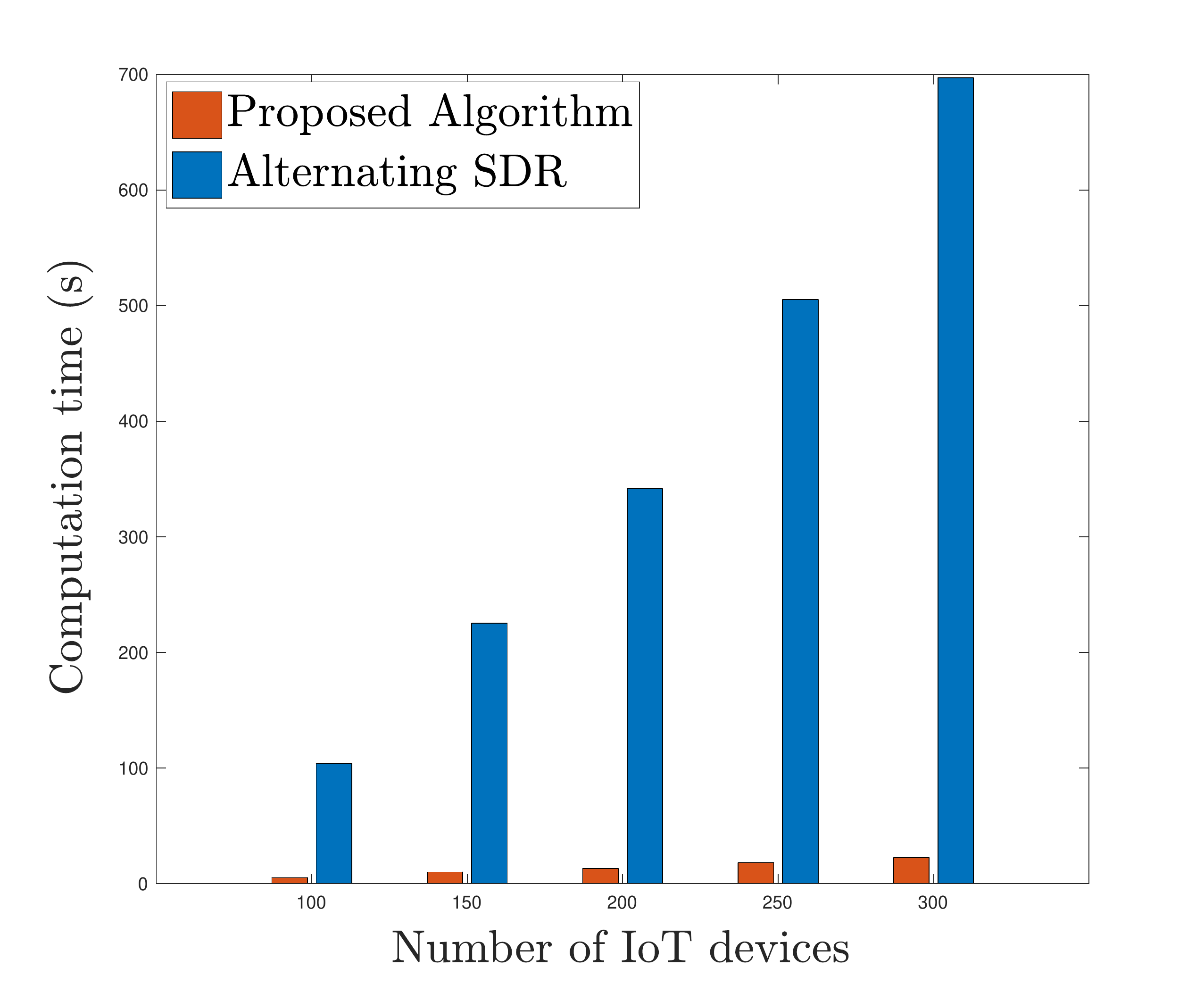}
%                \vspace{-6mm}
%                \caption{Computation time versus the number of IoT devices when $M=20$ and $N=50$.}\label{l:t:U}
%                \vspace{-9mm}
%        \end{minipage}

%All the simulation results below are obtained by averaging over 500 channel realizations.
As the alternating SDR and alternating DC algorithms are not guaranteed to converge, we set their iteration numbers as the number of iterations required by our proposed algorithm to converge for a fair comparison. 
Based on the number of IoT devices in the network coverage area, we consider the high-density and low-density scenarios. 

\subsubsection{High-density Scenario}
Fig. \ref{l:N} shows the impact of the number of antennas at the AP (i.e., $M$) on the MSE when $N=50$ and $K=200$.  
As can be observed, for all methods under consideration, the MSE of the RIS-assisted AirComp system monotonically decreases as the number of antennas increases.
This is because a greater diversity gain can be achieved by deploying a larger antenna array.
We can also observe that deploying an RIS significantly reduces the MSE in the considered AirComp system.
It shows that RIS is a promising technique that can enhance the performance of AirComp.
Besides, we observe that the proposed algorithm outperforms the benchmark scheme with random phase-shift at the RIS, which reveals the necessity of optimizing the phase-shifts vector in RIS-assisted AirComp systems.
Moreover, our proposed algorithm attains a similar performance as the alternating SDR algorithm.
On the other hand, as can be observed in Fig. \ref{l:t:N}, our proposed algorithm significantly outperforms the alternating SDR algorithm in terms of the computation time. 
Besides, the advantage on the computation time of the proposed algorithm increases as the number of antennas at the AP becomes large.
The reason is that the alternating SDR algorithm requires the execution of the second-order interior point method at each iteration, while the proposed algorithm as a first-order algorithm has a much lower computation complexity than the alternating SDR algorithm.

 \begin{figure}[t]
        \centering
        \begin{minipage}{.46\textwidth}
                \centering
                \includegraphics[width=7.5cm,height=6cm]{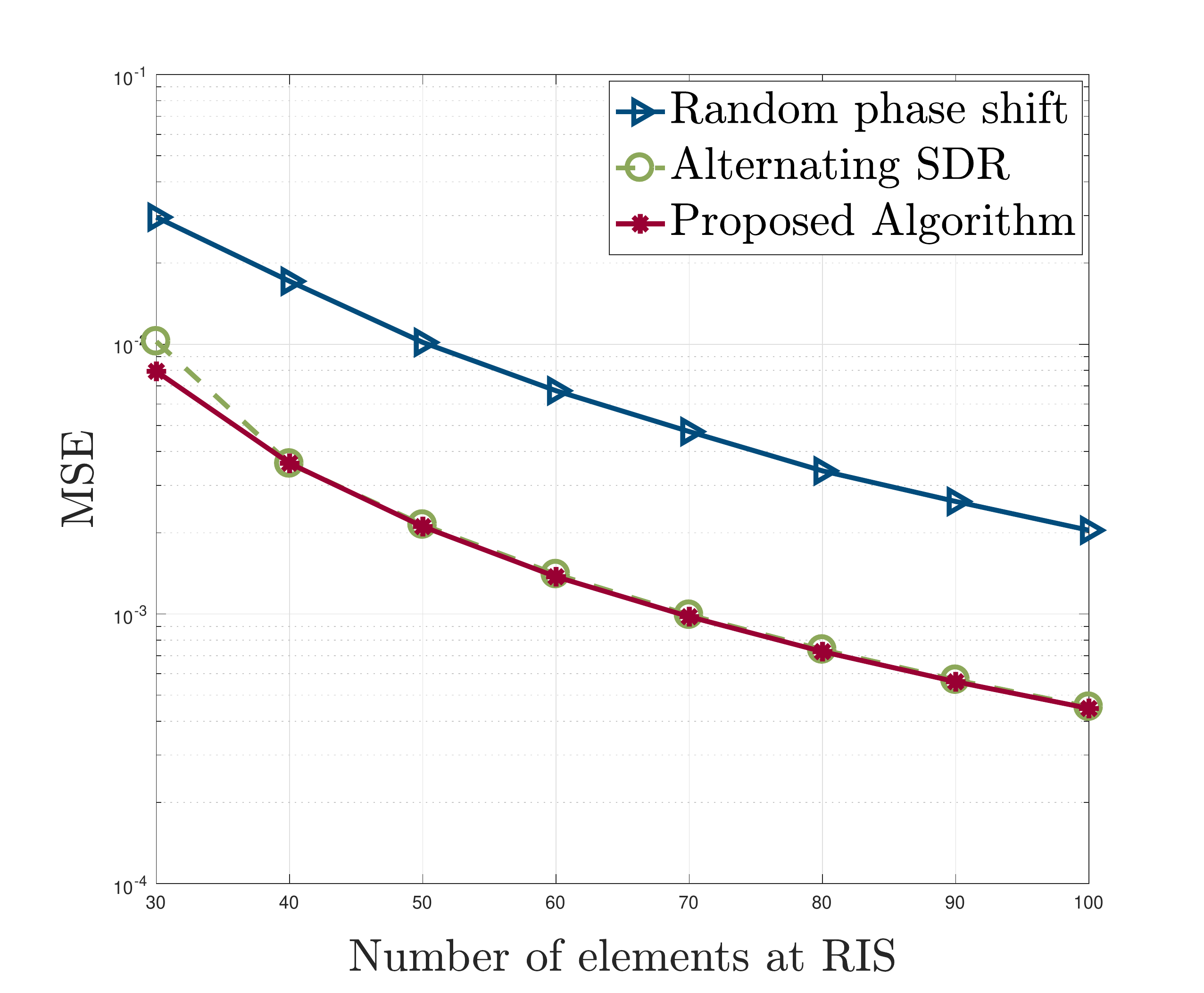}
                \vspace{-9mm}
                \caption{MSE versus the number of reflecting elements at RIS when $K=200$ and $M=20$.}\label{l:M}
                \vspace{-6mm}
        \end{minipage}
        \begin{minipage}{.46\textwidth}
                \centering
                \includegraphics[width=7.5cm,height=6cm]{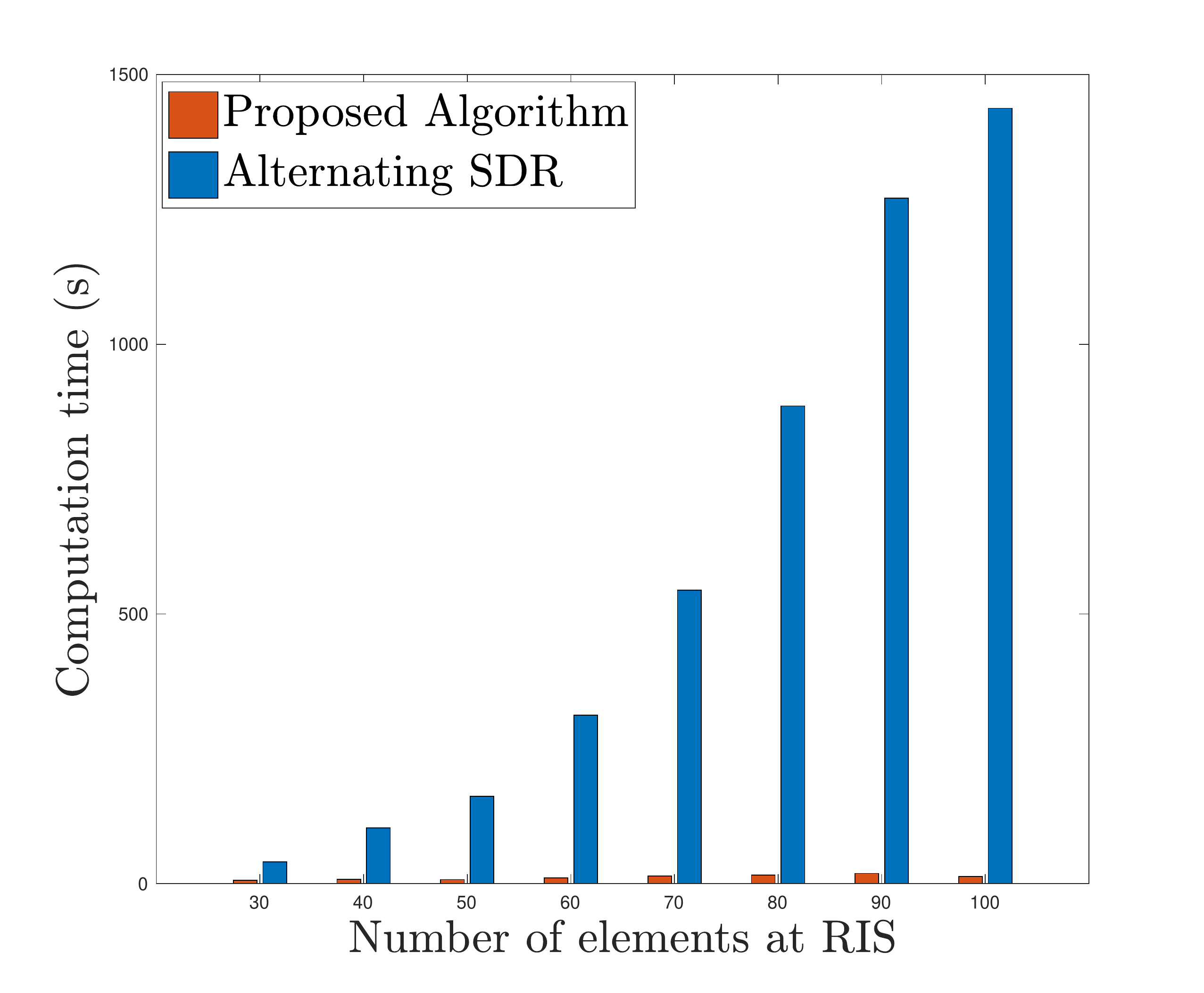}
                \vspace{-9mm}
                \caption{Computation time versus the number of reflecting elements at RIS when $K=200$ and $M=20$.}\label{l:t:M}
                \vspace{-6mm}
        \end{minipage}
\end{figure}

In Fig. \ref{l:M}, we investigate the impact of the number of the reflecting elements at the RIS on MSE when the number of IoT devices $K=200$ and the number of antennas at the AP $M=20$.
As can be observed from Fig. \ref{l:M}, the proposed algorithm achieves almost the same MSE performance for different number of reflecting elements against the benchmark, which confirms the capability of our proposed algorithm to achieve high accurate AirComp. 
In addition, the MSE decreases significantly as the number of reflecting elements at the RIS increases. 
This is due to the fact that the RIS with more reflecting elements has more freedom for the reflection coefficient design.
The computation time of our proposed algorithm and the alternating SDR algorithm versus the number of reflecting elements at the RIS is plotted in Fig. \ref{l:t:M}.
As the number of reflecting elements at the RIS increases, the computation time of the alternating SDR algorithm increases significantly. 
Comparing with the alternating SDR algorithm, the proposed algorithm only consumes $1\%$ and $16\%$ of the computation time when $N=100$ and $N=30$, respectively. 

    \begin{figure}[t]
    \centering
                \includegraphics[width=7.5cm,height=6cm]{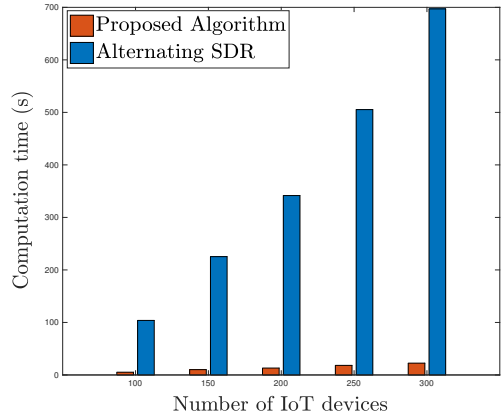}
                \vspace{-3mm}
                \caption{Computation time versus the number of IoT devices when $M=20$ and $N=50$.}\label{l:t:U}
                \vspace{-9mm}
    \end{figure}

The computation time of the considered algorithms versus the number of IoT devices is illustratd in Fig. \ref{l:t:U}. 
It is obvious that, compared to the benchmark, our proposed algorithm has a superior performance in terms of the computation time. 
In particular, the reduction ratios are approximately $97\%$ and $95\%$ when $K=300$ and $K=100$, respectively.
It can also be observed that the superiority of the proposed algorithm in terms of the computation time enlarges as the number of IoT devices increases. 
This indicates the potentials of our proposed algorithm in high-density RIS-assisted AirComp systems.

%In Fig. \ref{l:U}, we show the impact of the number of IoT devices on MSE when $K=200,~M=20,$ and $N=20$.
%As can be observed, the MSE slightly increases as the number of IoT devices in the AirComp system increases.
%Since the MSE is determined by the worst channel condition among the IoT devices to the AP.
%As a result, the more IoT devices in the AirComp system tends to lead to a larger MSE in a high probability. 
%In addition, as can be observed from Fig. \ref{l:t:U}, as the number of IoT devices increases, the time consumption of the alternating SDR method significantly increases while the proposed algorithm is scalable, which verifies the high time efficiency of our proposal.   
%Therefore, we can claim that our proposed algorithm attains the same performance as that of the state-of-the-art method but at significantly lower complexity.

\subsubsection{Low-density Scenario}
\begin{figure}[t]
        \centering
        \begin{minipage}{.46\textwidth}
                \centering
                \includegraphics[width=7.5cm,height=6cm]{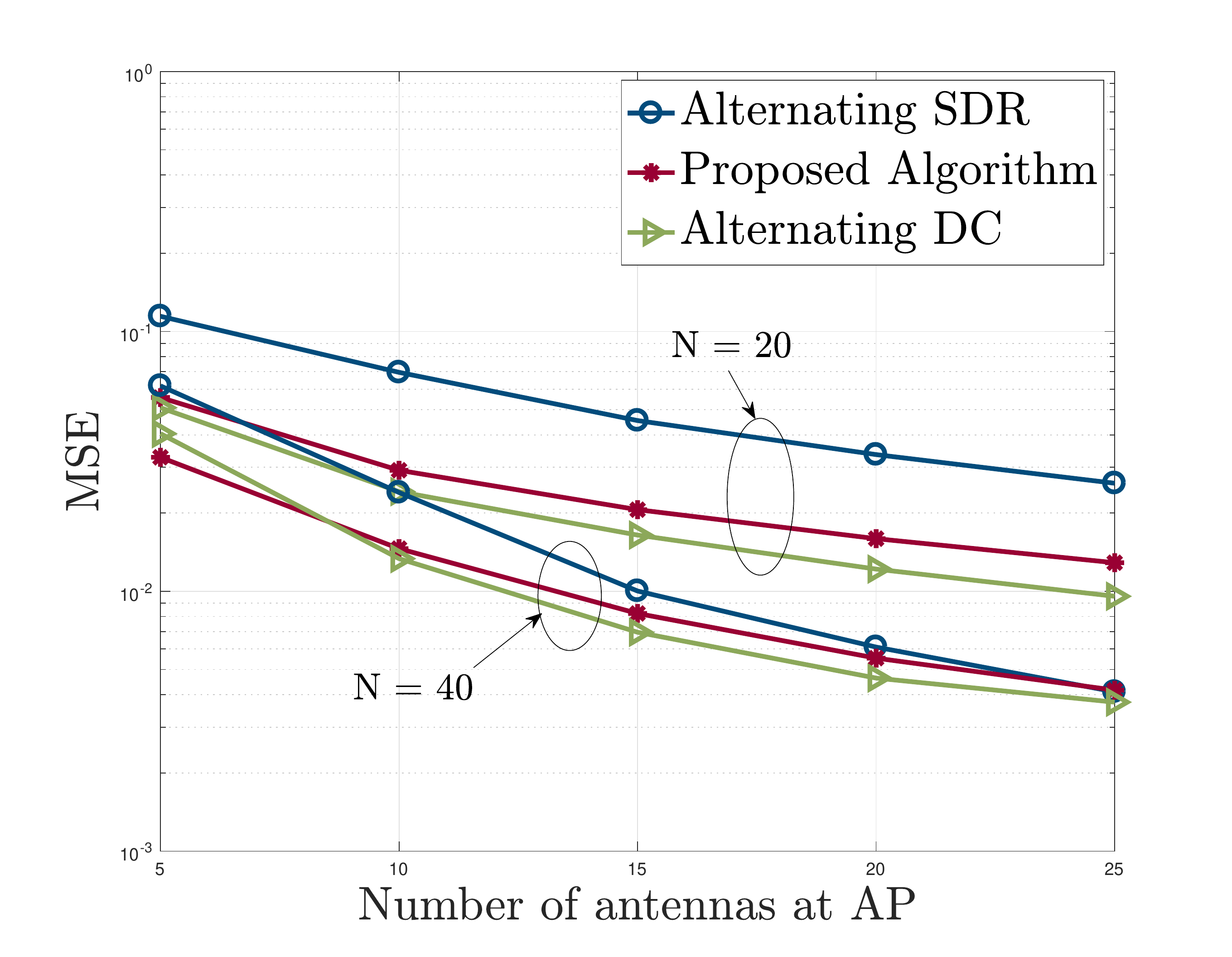}
                \vspace{-9mm}
                \caption{MSE versus the number of antennas at AP when $K=10$.}\label{DC0}
                \vspace{-6mm}
        \end{minipage}
        \begin{minipage}{.46\textwidth}
                \centering
                \includegraphics[width=7.5cm,height=6cm]{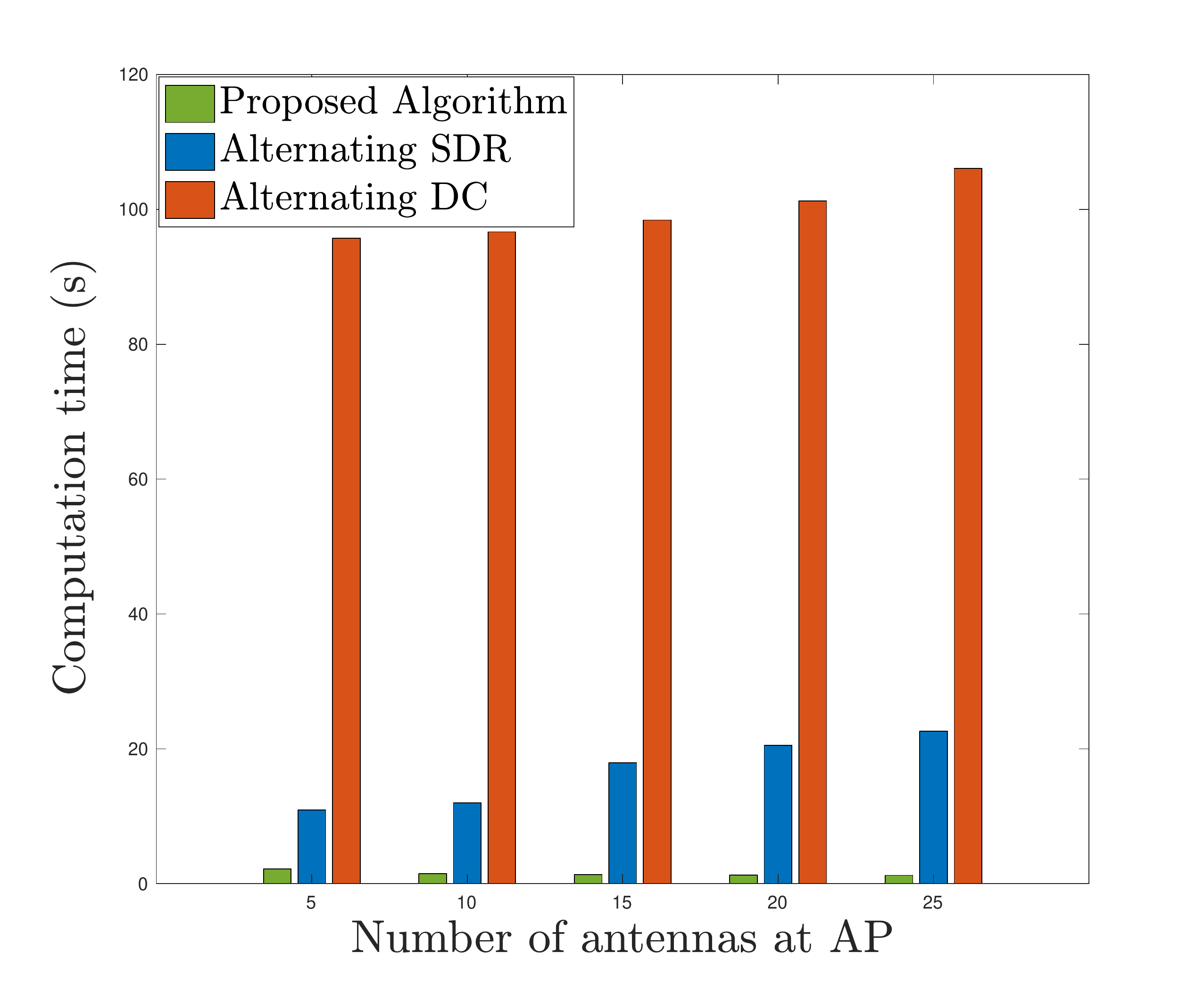}
                \vspace{-9mm}
                \caption{Computation time versus the number of antennas at AP when $K=10$ and $N=40$.}\label{DC1}
                \vspace{-6mm}
        \end{minipage}
\end{figure}

Due to the high computational complexity of the alternating DC algorithm, we compare its performance with the proposed algorithm when $K = 10$ in terms of the MSE and the computation time in Figs. \ref{DC0} and \ref{DC1}, respectively. 
It can be observed from Fig. \ref{DC0} that the proposed algorithm achieves a lower MSE than the alternating SDR algorithm and attains a slightly higher than MSE than the alternating DC algorithm. 
In addition, the MSE performance of all algorithms decreases as the number of reflecting elements at the RIS increases from 20 to 40. 
However, as shown in Fig. \ref{DC1}, our proposed algorithm is remarkably better than the alternating SDR algorithm and the alternating DC algorithm in terms of the computation time.
%The results suggest that the proposed algorithm attains a great trade-off between performance and complexity.

% if have a single appendix:
%\appendix[Proof of the Zonklar Equations]
% or
%\appendix  % for no appendix heading
% do not use \section anymore after \appendix, only \section*
% is possibly needed

% use appendices with more than one appendix
% then use \section to start each appendix
% you must declare a \section before using any
% \subsection or using \label (\appendices by itself
% starts a section numbered zero.)
%

\section{Conclusion}

In this paper, we proposed to leverage the advantage of RIS to mitigate the performance bottleneck of AirComp, thereby, achieving fast wireless data aggregation in IoT networks. 
We formulated an MSE minimization problem that requires the joint optimization of the transmit scalars at the IoT devices, the receive beamforming vector and the denoising factor at the AP, and the phase-shift matrix at the RIS. 
To solve this problem, a novel alternating minimization method in conjunction with the SCA technique was thus developed with convergence guarantee. 
To further reduce the computational complexity, we proposed a Mirror-Prox method that only involves a series of closed-form updates to solve the convex but non-smooth subproblem in each alternation. 
Simulations showed that, compared to the existing alternating SDR and alternating DC algorithms, the proposed algorithm can significantly reduce the computation time while achieving a similar MSE performance. 

\section*{Appendix}
\subsection{Proof of Proposition \ref{equi}}\label{}

For simplicity of notations, we denote 
\[
F_k(\bm m, \bm v) = \frac{\|\bm m\|^2}{\|\bm m^{\sf H}(\bm  h_{\mathrm{d},k}+\bm G \text{diag}(\bm  h_{\mathrm{r},k})\bm v )\|^2}, ~ \forall ~ k,
\]
such that problem \eqref{eq:p0} can be rewritten as
\begin{equation}
        \begin{split}
                \underset{\bm m,\bm v}{\operatorname{min}} ~& \max_k F_k(\bm m, \bm v) \\
                \text { s.t. } ~ & |v_i| = 1, \forall ~  i. %n=1,\cdots,N.
        \end{split}
\end{equation}
We then reformulate the problem into the following equivalent form:
\[
\begin{aligned}
        \begin{split}
                        \underset{\bm m,\bm v}{\operatorname{min}} ~& \max_k F_k(\bm m, \bm v) \\
            \text { s.t. } ~& |v_i| = 1, \forall ~  i,%n=1,\cdots,N.
                \end{split} 
                \quad \Longleftrightarrow \quad &
         \begin{split}
                         \underset{\bm m,\bm v}{\operatorname{max}} ~& \left( \max_k  F_k(\bm m, \bm v) \right)^{-1}\\
                \text { s.t. } ~& |v_i| = 1, \forall ~  i,%n=1,\cdots,N.
                \end{split}      \\
       \quad \Longleftrightarrow \quad &
        \begin{split}
            \underset{\bm m,\bm v}{\operatorname{max}} ~& \min_k F_k^{-1}(\bm m, \bm v) \\
            \text { s.t. } ~& |v_i| = 1, \forall ~  i, %n=1,\cdots,N.
        \end{split}
\end{aligned}
\]
where $F_k^{-1}(\bm m, \bm v) = \frac{1}{F_k(\bm m, \bm v)}$.
%\begin{equation}
%        \begin{split}
%                &\underset{\bm m,\bm v}{\operatorname{min}} \left(\max_k \frac{\|\bm m\|^2}{\|\bm m^{\sf H}(\bm  h_{\mathrm{d},k}+\bm G \text{diag}(\bm  h_{\mathrm{r},k})\bm v )\|^2}  \right)\\
%                &\text { s.t. } \quad |v_n| = 1, \forall ~  n=1,\cdots,N.
%        \end{split} \nonumber \Longleftrightarrow
%\end{equation}
%\begin{equation}
%         \begin{split} 
%                &\underset{\bm m,\bm v}{\operatorname{max}} \left(\max_k \frac{\|\bm m\|^2}{\|\bm m^{\sf H}(\bm  h_{\mathrm{d},k}+\bm G \text{diag}(\bm  h_{\mathrm{r},k})\bm v )\|^2} \right)^{-1}\\
%                &\text { s.t. } \quad |v_n| = 1, \forall ~  n=1,\cdots,N.
%        \end{split} \nonumber \Longleftrightarrow
%\end{equation}
%\begin{equation}
%        \begin{split}\label{eq:new}
%                &\underset{\bm m,\bm v}{\operatorname{max}} \left(\min_k \frac{\|\bm m^{\sf H}(\bm  h_{\mathrm{d},k}+\bm G \text{diag}(\bm  h_{\mathrm{r},k})\bm v )\|^2}{\|\bm m\|^2} \right)\\
%                &\text { s.t. } \quad |v_n| = 1, \forall ~  n=1,\cdots,N.
%        \end{split}
%\end{equation}
As a result, problem \eqref{eq:p0} is equivalent to 
\begin{equation}
        \begin{split}\label{eq:new}
                \underset{\bm m,\bm v}{\operatorname{max}} ~& \left\{\min_k \frac{\|\bm m^{\sf H}(\bm  h_{\mathrm{d},k}+\bm G \text{diag}(\bm  h_{\mathrm{r},k})\bm v )\|^2}{\|\bm m\|^2} \right\}\\
                \text { s.t. } ~& |v_n| = 1, \forall ~  n=1,\ldots,N.
        \end{split}
\end{equation}
Besides, we introduce an auxiliary variable $\tau=\|\bm m\|^2$.
Problem \eqref{eq:new} can be rewritten as the following form 
\begin{equation}\label{eq:pf2}
        \begin{split}
                \underset{\bm m,\bm v, \tau}{\max} ~& \left\{\min_k\frac{\|\bm m^{\sf H}(\bm  h_{\mathrm{d},k}+\bm G \text{diag}(\bm  h_{\mathrm{r},k})\bm v)\|^2}{\tau}\right\}\\
                 \text { s.t. } ~& |v_n| = 1,\forall ~  n=1,\ldots,N,\\
                ~& \|\bm m\|^2 = \tau.
        \end{split}
\end{equation}
By denoting $\hat{\bm m} = \frac{\bm m}{\sqrt{\tau}}$, problem \eqref{eq:pf2} can be represented as 
 \begin{equation}\label{eq:pf3}
        \begin{split}
                \underset{\hat{\bm m},\bm v}{\max} ~& \left\{\min_k \|\hat{\bm m}^{\sf H}(\bm  h_{\mathrm{d},k}+\bm G \text{diag}(\bm  h_{\mathrm{r},k})\bm v)\|^2\right\}\\
                \text { s.t. } ~& |v_n| = 1,\forall ~  n=1,\ldots,N,\\
                ~& \|\hat{\bm m}\|^2 = 1.
        \end{split}
\end{equation}
We further transform problem \eqref{eq:pf3} to its equivalent problem in the min-max form as follows 
%Specifically, problem \eqref{re:sub2} has the following equivalent form
\[
\begin{aligned}
        \begin{split}
                \underset{\hat{\bm m},\bm v}{\max} ~& \left\{\min_k \|\hat{\bm m}^{\sf H}(\bm  h_{\mathrm{d},k}+\bm G \text{diag}(\bm  h_{\mathrm{r},k})\bm v)\|^2\right\}\\
                \text { s.t. } ~& |v_n| = 1,\forall ~  n=1,\ldots,N,\\
                ~& \|\hat{\bm m}\|^2 = 1.
        \end{split} 
        & \Longleftrightarrow &
         \begin{split}
                \underset{\hat{\bm m},\bm v}{\min} ~& \left\{-\min_k \|\hat{\bm m}^{\sf H}(\bm  h_{\mathrm{d},k}+\bm G \text{diag}(\bm  h_{\mathrm{r},k})\bm v)\|^2\right\}\\
                \text { s.t. } ~& |v_n| = 1,\forall ~  n=1,\ldots,N,\\
                ~& \|\hat{\bm m}\|^2 = 1.
        \end{split} \\
       & \Longleftrightarrow &
        \begin{split}
                \underset{\hat{\bm m},\bm v}{\min} ~&\max_k \left\{-\|\hat{\bm m}^{\sf H}(\bm  h_{\mathrm{d},k}+\bm G \text{diag}(\bm  h_{\mathrm{r},k})\bm v)\|^2\right\}\\
                \text { s.t. } ~& |v_n| = 1,\forall ~  n=1,\ldots,N,\\
                ~& \|\hat{\bm m}\|^2 = 1.
        \end{split}
\end{aligned}
\]

%\section{Proof of Proposition 2}
%\begin{proof}
%Assuming the optimal solution of problem \eqref{re:sub1:ineq} is $\bm m^*$ and it corresponds to the optimal value $Q^*$.
%And the optimal value  attained by \eqref{re:sub1} is $W^*$.
%Then we have 
%\begin{equation}\label{<}
%       Q^* \geq W^*,
%\end{equation}
%since the feasible region of \eqref{re:sub1:ineq} contain \eqref{re:sub1}.
%On the other hand, we take a Euclidean projection on $\bm m^*$ to get $\bm {\bar m}$ which satisfies $$\|\bm {\bar m}\| = 1.$$
%Then we have $$\min_k ~\|\bm {\bar m}^{\sf H}\bm{h}_{k}\|^2 \geq \min_k ~\|{(\bm m^*)}^{\sf H}\bm{h}_{k}\|^2 .$$ 
%Thus we can assert 
%\begin{equation}\label{>}
%       W^* \geq Q^*,
%\end{equation}
%Finally, we can conclude that $Q^* = W^*$ according to \eqref{<} and \eqref{>}. 
%Thus, Problem \eqref{re:sub1} is equivalent to problem \eqref{re:sub1:ineq}.
%This completes the proof of Proposition 2.
%\end{proof}

\subsection{Proof of Proposition \ref{con}}\label{}

We first prove property (i).
Considering the $l$-th alternating iteration, for a given $\tilde{\bm m}_{(l)}^{(0)}$, we denote the objective value of problem \eqref{v:real:quad} as $f^v_{(l)}(\tilde{\bm v})$.
We consider the SCA iteration that starts from $\tilde{\bm v}_{(l)}^{(0)}$.
By denoting
$\hat{f}^v_{(l)}(\tilde{\bm v},\tilde{\bm v}_{(l)}^{(n)})$ as the objective value of problem \eqref{saddle:ori2}, it satisfies
\begin{enumerate}
\item $\hat{f}^v_{(l)}(\tilde{\bm v}_{(l)}^{(n)},\tilde{\bm v}_{(l)}^{(n)}) = f^v_{(l)}(\tilde{\bm v}_{(l)}^{(n)})$,
\item $f^v_{(l)}(\tilde{\bm v}_{(l)}^{(n+1)}) \leq \hat{f}^v_{(l)}(\tilde{\bm v}_{(l)}^{(n+1)},\tilde{\bm v}_{(l)}^{(n)})$,
\item $\hat{f}^v_{(l)}(\tilde{\bm v}_{(l)}^{(n+1)},\tilde{\bm v}_{(l)}^{(n)}) \leq \hat{f}^v_{(l)}(\tilde{\bm v}_{(l)}^{(n)},\tilde{\bm v}_{(l)}^{(n)}).$
\end{enumerate}
Inequality 1) holds since $\hat{f}^v_{(l)}(\tilde{\bm v},\tilde{\bm v}_{(l)}^{(n)})$ is a linear approximation of $f^v_{(l)}(\tilde{\bm v})$ at point $\tilde{\bm v}_{(l)}^{(n)}$.
According to \eqref{first}, $\hat{f}^v_{(l)}(\tilde{\bm v},\tilde{\bm v}_{(l)}^{(n)})$ is an upper bound of $f^v_{(l)}(\tilde{\bm v})$.
As a result, we can establish inequality 2).
Besides, inequality 3) holds as $\tilde{\bm v}_{(l)}^{(n+1)} = \argmin_{\tilde{\bm v} \in \mathcal{V}} \hat{f}^v_{(l)}(\tilde{\bm v},\tilde{\bm v}_{(l)}^{(n)})$.
Hence, we obtain the following chain inequalities
\[
f^v_{(l)}(\tilde{\bm v}_{(l)}^{(n+1)}) \leq \hat{f}^v_{(l)}(\tilde{\bm v}_{(l)}^{(n+1)},\tilde{\bm v}_{(l)}^{(n)}) \leq \hat{f}^v_{(l)}(\tilde{\bm v}_{(l)}^{(n)},\tilde{\bm v}_{(l)}^{(n)}) = f^v_{(l)}(\tilde{\bm v}_{(l)}^{(n)}) .
\]              
Besides, the continuous function $f^v_{(l)}(\tilde{\bm v})$ has a lower bound in the constrained set.
As a result, the non-increasing and lower bounded sequence $\{f^v_{(l)}(\tilde{\bm v}_{(l)}^{(n)})\}$ converges.
Similarly, we can establish the non-increasing and convergent property for the objective value sequence achieved by $\{\tilde{m}_{(l)}^{(n)}\}_{n=0}^{\infty}$.
To this end, we have proved property i).

On the other hand, we can prove property ii) by iteratively utilizing property i).

\subsection{Proof of Lemma \ref{P_D}}\label{}

        We have shown that problem \eqref{saddle:ori2} is equivalent to problem \eqref{socp}. 
        We then rewrite problem \eqref{socp} in a more compact form as
\begin{equation}\label{socp_}
        \begin{split}
                \underset{\tilde{\bm v}, t}{\min}  ~~& t \\
                \text { s.t. }  ~&  \mathbf{P}^{(n)} \tilde{\bm v} + \mathbf{q}^{(n)} - t\mathbf{1}  \preceq 0, \\
                                ~& \tilde{\bm v} \in \mathcal{V}, ~ t \in \mathbb{R}.
       \end{split}
\end{equation}
where $\mathbf{1} = [1,1,\ldots,1]^{\sf T} \in \mathbb{R}^{K}.$ 
For a given vector $\bm{y} \in \mathbb{R}^{K}$ with non-negative components, which is known as dual variable, the corresponding Lagrangian relaxed problem is given by  
\begin{equation}\label{socp_lag}
        \begin{split}
                \underset{\tilde{\bm v}, t}{\min}  ~~& t + \bm{y}^{\sf T}\left(\mathbf{P}^{(n)} \tilde{\bm v} + \mathbf{q}^{(n)} - t\mathbf{1} \right) \\
                \text { s.t. } ~& \tilde{\bm v} \in \mathcal{V}, ~ t \in \mathbb{R}.
       \end{split}
\end{equation}
We reorganize the objective function of problem \eqref{socp_lag} as 
\[
\left(1-\mathbf{1}^{\sf T} \bm{y}\right)t + \left(\mathbf{P}^{(n)} \tilde{\bm v} + \mathbf{q}^{(n)}\right)^{\sf T} \bm{y}.
\]
The dual objective function of problem \eqref{socp_}  is given by the optimal value of problem \eqref{socp_lag}.
If $1-\mathbf{1}^{\sf T} \bm{y} \neq 0$, then $\left(1-\mathbf{1}^{\sf T} \bm{y}\right)t$ is not bounded from below, since $\left(1-\mathbf{1}^{\sf T} \bm{y}\right)t \rightarrow -\infty$ if $\mathbf{1}^{\sf T} \bm{y}>1$ and $t \rightarrow+\infty$ or if $\mathbf{1}^{\sf T} \mathbf{y}<1$ and $t \rightarrow-\infty$.

Thus, we only need to consider $\bm{y} \in \mathbb{R}^{K}$ that satisfy $\bm{y} \geq \bm{0}$ and $\mathbf{1}^{\sf T} \bm{y}=1$. For these $\bm{y}$, the dual objective function is given by $\min _{\tilde{\bm v} \in \mathcal{V}}\left(\mathbf{P}^{(n)} \tilde{\bm v} + \mathbf{q}^{(n)}\right)^{\sf T} \bm y$.
The dual problem is 
\begin{equation}\label{derive:dual}
        \begin{split}
              \max_{\mathbf{y}} &~ \underset{\tilde{\bm v}}{\min} ~ \left(\mathbf{P}^{(n)} \tilde{\bm v} + \mathbf{q}^{(n)}\right)^{\sf T} \bm y\\
                \text { s.t. } &~  \tilde{\bm v} \in \mathcal{V}, ~ \bm y \in \mathcal{Y}.
        \end{split}
\end{equation}
Because the original problem is convex and set $\mathcal{V}$ is closed and compact, the strong duality condition is satisfied. 
Hence, its dual problem is equivalent to itself. 
As the objective function $\left(\mathbf{P}^{(n)} \tilde{\bm v} + \mathbf{q}^{(n)}\right)^{\sf T} \bm y$ is linear to $\tilde{\bm v}$ and $\bm y$, according to \cite{boyd_vandenberghe_2004}, problem \eqref{derive:dual} is equivalent to problem \eqref{ori:saddle4}, which is a smooth convex-concave saddle point problem \cite{variational}.

\subsection{Proof of Lemma \ref{Lipschitz}}
$\mathbf{F}(\cdot)$ is a linear operator with well-defined monotonicity. 
For the $L$-Lipschitz property of $\mathbf{F}(\cdot)$, we need verify the following inequality: 
\begin{equation}
\begin{array}{l}
\text{a).}\left\|\nabla_{\tilde{\bm v}} \psi^{(n)}(\tilde{\bm v}, \bm y)-\nabla_{\tilde{\bm v}} \psi^{(n)}\left(\tilde{\bm v}^{\prime}, \bm y\right)\right\| \leq 0\cdot\left\|\tilde{\bm v}-\tilde{\bm v}^{\prime}\right\|, \\
\text{b).}\left\|\nabla_{\tilde{\bm v}} \psi^{(n)}(\tilde{\bm v}, \bm y)-\nabla_{\tilde{\bm v}} \psi^{(n)}\left(\tilde{\bm v}, \bm y^{\prime}\right)\right\| \leq L \cdot\left\|\bm y-\bm y^{\prime}\right\|_{1}, \\
\text{c).}\left\|\nabla_{\bm y} \psi^{(n)}(\tilde{\bm v}, \bm y)-\nabla_{\bm y} \psi^{(n)}\left(\tilde{\bm v}, \bm y^{\prime}\right)\right\|_{\infty} \leq 0\cdot\left\|\bm y-\bm y^{\prime}\right\|_{1}, \\
\text{d).}\left\|\nabla_{\bm y} \psi^{(n)}(\tilde{\bm v}, \bm y)-\nabla_{\bm y} \psi^{(n)}\left(\tilde{\bm v}^{\prime}, \bm y\right)\right\|_{\infty} \leq L\cdot\left\|\tilde{\bm v}-\tilde{\bm v}^{\prime}\right\|,
\end{array} \nonumber
\end{equation}
where $L = \max_{k}\{\|\mathbf{p}_{k}^{(n)}\|\}$, $\|\cdot\|$ and $\|\cdot\|_{\infty}$ are the dual norm of $\|\cdot\|$ and $\|\cdot\|_{1}$, respectively. 
First, inequalities a) and c) hold due to the fact that
\[\nabla_{\tilde{\bm v}} \psi^{(n)}(\tilde{\bm v}, \bm y) = \nabla_{\tilde{\bm v}} \psi^{(n)}\left(\tilde{\bm v}^{\prime}, \bm y\right)=(\mathbf{P}^{(n)})^{\sf T}\bm y, \]
\[\nabla_{\bm y} \psi^{(n)}(\tilde{\bm v}, \bm y) = \nabla_{\bm y} \psi^{(n)}\left(\tilde{\bm v}, \bm y^{\prime}\right) = \mathbf{P}^{(n)} \tilde{\bm v} + \mathbf{q}^{(n)} .\]
Inequality b) holds since  
\begin{equation}
\begin{aligned}
&\quad\left\|\nabla_{\tilde{\bm v}} \psi^{(n)}(\tilde{\bm v}, \bm y)-\nabla_{\tilde{\bm v}} \psi^{(n)}\left(\tilde{\bm v}, \bm y^{\prime}\right)\right\| \\  
&= \|(\mathbf{P}^{(n)})^{\sf T}\bm y - (\mathbf{P}^{(n)})^{\sf T}\bm y^{\prime}\| = \|\sum_{k=1}^{K}\mathbf{p}_{k}^{(n)}(y_k-y_k^{\prime})\| \\ 
&\leq \sum_{k=1}^{K}\|\mathbf{p}_{k}^{(n)}\| \cdot |y_k-y_k^{\prime}| \leq  \left(\max_{k}\|\mathbf{p}_{k}^{(n)}\|\right) \cdot \left(\sum_{k=1}^{K}|y_k-y_k^{\prime}|\right) \\
&= L \cdot \|\bm y - \bm y^{\prime}\|_1.
\end{aligned}
\nonumber
\end{equation}
Finally, inequality d) holds as  
\begin{equation}
\begin{aligned}
&\quad\left\|\nabla_{\bm y} \psi^{(n)}(\tilde{\bm v}, \bm y)-\nabla_{\bm y} \psi^{(n)}\left(\tilde{\bm v}, \bm y^{\prime}\right)\right\|_{\infty} \\  
&= \|\mathbf{P}^{(n)}\tilde{\bm v} - \mathbf{P}^{(n)}\tilde{\bm v}^{\prime}\|_{\infty} =\max_{k}\left\{\left|\left(\mathbf{p}_{k}^{(n)}\right)^{\sf T}(\tilde{\bm v} - \tilde{\bm v}^{\prime})\right|\right\} \\ 
&\underset{(a)} \leq \left(\max_{k}\|\mathbf{p}_{k}^{(n)}\|\right) \cdot \| \tilde{\bm v} - \tilde{\bm v}^{\prime} \| \\
&= L \cdot \|\tilde{\bm v} - \tilde{\bm v}^{\prime}\|,
\end{aligned}
\nonumber
\end{equation}
where $(a)$ follows by applying the Cauchy-Schwarz inequality.

\bibliographystyle{IEEEtran}  % set style to IEEE
\bibliography{reference.bib}

\ifCLASSOPTIONcaptionsoff
  \newpage
\fi

\end{document}